\providecommand{\keywords}[1]{\textit{Keywords:} #1}
\newtheorem{theorem}{Theorem}
\newtheorem{claim}[theorem]{Claim}
\newtheorem{lemma}[theorem]{Lemma}
\newtheorem{remark}[theorem]{Remark}
\newenvironment{proof}[1][Proof]{\noindent\textbf{#1.} }{\ \rule{0.5em}{0.5em}}
\begin{document}

\title{Diffusion properties of small-scale fractional transport models}                      

\author[1]{Paolo Cifani}
\author[1]{Franco Flandol}
\affil[1]{Scuola Normale Superiore, Piazza dei Cavalieri, 7, Pisa, Italy}

\date{\today}

\maketitle

%%%%%%%%%%  ABSTRACT
\begin{abstract}
Stochastic transport due to a velocity field modeled by the superposition of
small-scale divergence free vector fields activated by Fractional Gaussian
Noises (FGN) is numerically investigated. We present two non-trivial
contributions: the first one is the definition of a model where different
space-time structures can be compared on the same ground: this is achieved by
imposing the same average kinetic energy to a standard Ornstein-Uhlenbeck
approximation, then taking the limit to the idealized white noise structure.
The second contribution, based on the previous one, is the discover that a
mixing spatial structure with persistent FGN in the Fourier components induces
a classical Brownian diffusion of passive particles, with suitable diffusion
coefficient; namely, the memory of FGN is lost in the space complexity of the
velocity field.
\end{abstract}
%%%%%%%%%%

\keywords{Fractional Brownian Motion, Stochastic Transport, Stochastic Fluid Particles, Ornstein-Uhlenbeck, Hurst Exponent}

\maketitle

\section{Introduction}\label{sec:into}

Many phenomena in nature, most prominently the motion of particles suspended in a quiescent medium, are well described by standard Brownian Motion $B_t$. It is often found that the physics of these problems is statistically stationary. The prototypical model for such phenomena is the Ornstein-Uhlenbeck (OU) process
\begin{equation}
dZ_{t}^{\tau}=-\frac{1}{\tau}Z_{t}^{\tau} dt+\left(\frac{2\sigma^2}{\tau}\right)^{1/2}dB_{t}
\label{eq:OU_basic}
\end{equation}
with integral timescale $\tau$ and stationary variance $\sigma^2$. The increments $dB_t$ are Gaussian white noise, thus samples of the latter at different times are independent. While this is a good approximation in several instances, studies on random processes, e.g. turbulent flows and financial time series, have shown strong interdependence between distant samples. To this aim, an extension to (\ref{eq:OU_basic}) was put forward in the seminal paper \cite{mandelbrot} where long-range dependence is regulated by the Hurst exponent $H \in ]0,1[$. We can speak then of fractional Ornstein-Uhlenbeck process
\begin{equation}
dZ_{t}^{\tau,H}=-\frac{1}{\tau}Z_{t}^{\tau,H} dt+\frac{c_H}{\tau^H}dB_{t}^H
\label{eq:OU_fractional}
\end{equation}
where the driving random process $B_t^H$ is a fractional Gaussian Process of Hurst exponent $H$. Despite is early appearance about half a century ago, the literature on this subject is relatively young, primarily due to the difficulties, both analytically and numerically, introduced by statistical dependence of increments. Here, by means of theoretical and numerical tools, we attempt to take a step forward into the understanding of the fate of particles transported by vector fields whose components are Fractional Gaussian Processes. In particular, we will focus on stochastic transport for $H > 1/2$, i.e. positively correlated increments, and compare the findings to the case $H=1/2$, i.e. standard Brownian Motion. 

For a rigorous definition of Fractional Brownian Motion and stochastic integration in
the case $H\geq1/2$ (Young's integral) and its link with Stratonovich
integration, we refer the reader to \cite{Nualart} (also \cite{Kruk}, or \cite{FlaRusso}).
Fractional Gaussian Processes in applications, such as turbulent fluid models,
have been introduced in several works. In most cases the fractality is however
understood with respect to the space-structure (see for instance
\cite{Apollinaire}, \cite{Aplooin2}), because of its great interest in
connection with Kolmogor theory and variants like the multifractal model. The
interest of Fractional Brownian Motion in time for turbulence modeling is
discussed for instance in \cite{Chevillard}, see also references therein.
Several very interesting works prove a fractional structure of the limit
process of an homogenization procedure, rescaling of deterministic or
stochastic fields; the space structure is never "chaotic" in the sense of the
present paper, hence the emergence of a fractional behaviour in the limit; see
for instance \cite{FannKom0}, \cite{FannKom}, \cite{FannKom2}. 

The phenomenon considered here seems to be new: we found the emergence of a
Brownian behaviour in time from a space-time structure consisting of Fractional Brownian Motions in time and spatial high 
frequency fluctuations in all directions, which restore some independence of
increments. The model considered here is similar to the one theoretically
investigated in \cite{FlaRusso}, where closed forms of moments of solution are
found. However the case of non commuting vector fields - precisely the case
which restores a degree of independence of increments - has not been
theoretically solved there, only preliminary discussed, and indeed the result
of the present paper is a confirmation of the fact that the time behavior is
not trivial, in the non-commutative case. The model of \cite{FlaRusso} has
some similarity with the model considered in \cite{KomorNovikov}, where
however only two Fractional Brownian Motions act, hence the restoring of
independence is not possible. See also \cite{Squarcini} for a model with some
similar features.

The model considered here and in \cite{FlaRusso} is an extension to Fractional
Brownian Motions of the models considered in several works in the case of
classical Brownian Motion, see for instance \cite{Galeati}, \cite{FlaLuongo},
\cite{Luo} among several works also cited there. These papers deal with
stochastic transport in Stratonovich form, a basic modeling idea performed
recently for several models, also for small-scale transport of large scales -
not only for transport of a passive scalars - see for instance \cite{Chapron},
\cite{Crisan1}, \cite{Crisan2}, \cite{Ephrati}, \cite{Holm}, \cite{Holm2},
\cite{Memin}, \cite{Resseguier}; see also \cite{Majda} for a review of
diffusion limits.

The structure of this paper is organised as follows: in Sec. \ref{sec:stoc_struct} the analytical framework is presented and the derivation of the our stochastic model is given. Specific examples of the model are then illustrated and a statement of the main claim of this work is provided. In Sec. \ref{sec:numerical} the numerical results are presented and compared with the theoretical predictions. Finally, in Sec. \ref{sec:concl} conclusions and outlook are summarised. The analytical derivations are collected in the Appendix to facilitate the readability of this paper. 

\section{Stochastic transport structure}\label{sec:stoc_struct}

Consider the transport equation%
\begin{equation}
\begin{aligned}
\partial_{t}T+\mathbf{u}\cdot\nabla T  &  =0\\
T|_{t=0}  &  =T_{0}%
\end{aligned}
\label{eq:system}
\end{equation}
in $\mathbb{R}^{2}$, where $\mathbf{u}\left(  \mathbf{x},t\right)  $ is a
divergence free vector field. Assume that $T_{0}\geq0$ is integrable, or more
conventionally that it is a probability density function (pdf), so that
$T\left(  \cdot,t\right)  $ is also a pdf for every $t\geq0$ (since
$\mathbf{u}$ will have the necessary regularity for such a result).

We assume that $\mathbf{u}\left(  \mathbf{x},t\right)  $ is not the true
solution of a fluid dynamic equation but it is a stochastic model preserving
some idealized properties of a turbulent fluid, precisely a model of the
following simplified form%
\begin{equation}
\mathbf{u}\left(  \mathbf{x},t\right)  =uC\left(  \eta,\tau,H\right)
\sum_{\mathbf{k}\in\mathbf{K}_{\eta}}\mathbf{\sigma}_{\mathbf{k}}\left(
\mathbf{x}\right)  \frac{dB_{t}^{H,\mathbf{k}}}{dt}%
\label{eq:model}
\end{equation}
where $C\left(  \eta,\tau,H\right)  $ is a normalizing constant allowing us to
compare models with different space-time structure. Here $u$ is an
\textit{average velocity} constant, with dimension $\left[  L\right]  /\left[
T\right]  $; $\eta$ is a space scale (inspired to the notation of the
so-called Kolmogorov scale), with dimension $\left[  L\right]  $; $H$ is the
Hurst index of the independent real-valued Fractional Brownian Motions (FBM)
$B_{t}^{H,\mathbf{k}}$, which have dimension $\left[  T\right]  ^{H}$ (due to
the property $\mathbb{E}\left[  \left\vert B_{t}^{H,\mathbf{k}}\right\vert
^{2}\right]  =t^{2H}$); the index set $\mathbf{K}_{\eta}$ will correspond (in
the nontrivial case) to \textit{length scales of order} $\eta$, and it is
assumed to be a finite set; the divergence free vector fields $\mathbf{\sigma
}_{\mathbf{k}}\left(  \mathbf{x}\right)  $ will be described below in the
examples, and are dimensionless. The normalizing constant $C\left(
\eta,H\right)  $ has dimension $\left[  T\right]  ^{1-H}$, to compensates the
dimension $\left[  T\right]  ^{H-1}$ of $\frac{dB_{t}^{H,\mathbf{k}}}{dt}$.
Precisely, the constant $C\left(  \eta,\tau,H\right)  $ is given by%
\[
C\left(  \eta,\tau,H\right)  =\frac{\tau^{1-H}\sqrt{2}}{\sqrt{\Gamma\left(
2H+1\right)  }}\frac{1}{C_{\eta}}%
\]
where $\tau$ has the meaning of \textit{relaxation time} of the fluid, and it
is typically a small constant, with dimension $\left[  T\right]  $,
$\Gamma\left(  r\right)  $ is the Gamma function and the constant $\frac
{1}{C_{\eta}}$ is a normalizing factor for the sum over $\mathbf{k}$ of the
$\mathbf{\sigma}_{\mathbf{k}}$, defined by (\ref{def of C eta}) below. We give a precise
motivation for the choice of the noise and all the constants in its definition in Appendix \ref{appendix noise structure}.

Thanks to the factor $u\tau^{1-H}$, we keep memory of the fact that a true
fluid has a finite relaxation time and a finite kinetic energy, properties
that are formally lost in the model above. Indeed, the Fractional Gaussian
Noise (FGN) $\frac{dB_{t}^{H,\mathbf{k}}}{dt}$ does not have a characteristic
time-scale and has infinite variance. 

Thanks to the precise normalizing factor $C\left(  \eta,\tau,H\right)  $, we
may compare quantitatively different values of $H$ and $\eta$ (see Appendix
\ref{appendix noise structure}). The natural idea to put different models of
the previous form on the same ground would be to impose that they have the
same average kinetic energy; but, as we have already remarked, the FGN
$\frac{dB_{t}^{H,\mathbf{k}}}{dt}$ has infinite variance. Hence we introduce
an Ornstein-Uhlenbeck approximation%
\begin{equation}
\mathbf{u}_{\tau}\left(  \mathbf{x},t\right)  =\frac{u}{C_{\eta}}%
\sum_{\mathbf{k}\in\mathbf{K}_{\eta}}\mathbf{\sigma}_{\mathbf{k}}\left(
\mathbf{x}\right)  Z_{t}^{\tau,H,\mathbf{k}}\label{smooth velocity}%
\end{equation}
where $Z_{t}^{\tau,H,\mathbf{k}}$ is the solution of equation
\begin{equation}
dZ_{t}^{\tau,H,\mathbf{k}}=-\frac{1}{\tau}Z_{t}^{\tau,H,\mathbf{k}}%
dt+\frac{c_{H}}{\tau^{H}}dB_{t}^{H,\mathbf{k}}\label{OU}%
\end{equation}
with $Z_{0}^{\tau,H,\mathbf{k}}=0$ and $c_{H}$ chosen so that $\mathbb{E}\left[  \left\vert Z_{t}^{\tau
,H,\mathbf{k}}\right\vert ^{2}\right]  \to 1$ as $t\to \infty$ (the factor $\tau^{H}$ in the noise
term $\frac{c_{H}}{\tau^{H}}dB_{t}^{H,\mathbf{k}}$ compensate the dimension of
$B_{t}^{H,\mathbf{k}}$ to produce the adimensional quantity $Z_{t}%
^{\tau,H,\mathbf{k}}$); it is given by $c_{H}=\frac{\sqrt{2}}{\sqrt
{\Gamma\left(  2H+1\right)  }}$ (see Appendix \ref{appendix noise structure}). As shown in Lemma \ref{Lemma_main}, the previous
equation can be approximated by
\begin{equation}
Z_{t}^{\tau,H,\mathbf{k}}\sim\tau^{1-H}c_{H}dB_{t}^{H,\mathbf{k}}%
\label{eq:OU_surr}
\end{equation}
which leads to the model above. Process (\ref{eq:OU_surr}) is more amenable to 
analytical treatment than (\ref{OU}) and therefore adopted in this work as a surrogate of the OU process. 
Moreover, we consider this "normalization" an important
step in view of the comparison between different forms of stochastic
transport. 

Concerning the smooth divergence free vector fields $\mathbf{\sigma
}_{\mathbf{k}}$, we assume that the limits
\begin{equation}
\left\langle \mathbf{\sigma}_{\mathbf{k}}\right\rangle ^{2}:=\lim
_{R\rightarrow\infty}\frac{1}{R^{2}}\int_{\left[  -\frac{R}{2},\frac{R}%
{2}\right]  ^{2}}\left\vert \mathbf{\sigma}_{\mathbf{k}}\left(  \mathbf{x}%
\right)  \right\vert ^{2}dx\label{condition on sigma}%
\end{equation}
exist, and take $C_{\eta}$ above given by%
\begin{equation}
C_{\eta}=\sqrt{\sum_{\mathbf{k}\in\mathbf{K}_{\eta}}\left\langle
\mathbf{\sigma}_{\mathbf{k}}\right\rangle ^{2}}.\label{def of C eta}%
\end{equation}

\subsection{Specific examples and motivation}

In the choice of the two examples below we are motivated by a certain
variety of turbulent flows appearing in confined plasma experiments and
simulations. We will not treat realistic velocity fields emerging from such
application but only paradigmatic idealizations, however inspired by such
observations. 

It is observed that, in the poloidal section, the electromagnetic field, which
originally is perturbed at a very small scale by certain instabilities,
becomes organized also in structures of vortical type having a larger coherent scale. 
A stochastic parametrization of them could involve FGN with
Hurst parameter $H>1/2$, to model the persistence of the perturbation (the
larger a structure is, the more persistent is its transport effect). See for
instance, in the review \cite{garbet2010gyrokinetic}, Figures 9, 11, 14:
sometimes the perturbations are very disordered, sometimes else they are
organized in relatively parallel stripes (streamers, transport barriers). What
happens to heat and matter subject to such a velocity field? Which are the
turbulent transport properties?

We then consider two paradigmatic cases. The first one is simply made of
constant vector fields; it is not very realistic w.r.t. such applications but
it serves as a reference case. Moreover, even if so abstract, it behaves similarly to ``streamers", 
namely coherent elongated structures. The second one is made of several disordered
small-scale structures. Let us introduce the formal definitions.

The trivial case, that we call \textit{control case}, discussed mostly for
comparison, is defined by
\[
\mathbf{K}_{\eta}=\left\{  1,2\right\}  ,\qquad\mathbf{\sigma}_{1}\left(
\mathbf{x}\right)  =\left(  1,0\right)  ,\mathbf{\sigma}_{2}\left(
\mathbf{x}\right)  =\left(  0,1\right)  .
\]
We have $\left\langle \mathbf{\sigma}_{i}\right\rangle ^{2}=1$ for both
$i=1,2$, hence $C_{\eta}=\sqrt{2}$,%
\[
C\left(  \eta,\tau,H\right)  =\frac{\tau^{1-H}}{\sqrt{\Gamma\left(
2H+1\right)  }}%
\]
\[
\mathbf{u}\left(  \mathbf{x},t\right)  =u\frac{\tau^{1-H}}{\sqrt{\Gamma\left(
2H+1\right)  }}\left(  \frac{dB_{t}^{H,1}}{dt},\frac{dB_{t}^{H,2}}{dt}\right)
.
\]

We then introduce the so-called \textit{test case}, defined by a number
$\eta>0$,
\begin{equation}
\begin{aligned}
&\mathbf{K}_{\eta}=\left\{  \mathbf{k}\in\mathbb{Z}^{2}:\left\vert
\mathbf{k}\right\vert \in\left[  \frac{1}{2\eta},\frac{1}{\eta}\right]
\right\} \\
&\mathbf{\sigma}_{\mathbf{k}}\left(  \mathbf{x}\right)  =\frac{\mathbf{k}%
^{\perp}}{\left\vert \mathbf{k}\right\vert }\cos\left(  \mathbf{k}%
\cdot\mathbf{x}\right)  \qquad\text{if }\mathbf{k\in K}_{\eta}^{+}\text{ }%
\\
&\mathbf{\sigma}_{\mathbf{k}}\left(  \mathbf{x}\right)  =\frac{\mathbf{k}%
^{\perp}}{\left\vert \mathbf{k}\right\vert }\sin\left(  \mathbf{k}%
\cdot\mathbf{x}\right)  \qquad\text{if }\mathbf{k\in K}_{\eta}^{-}%
\end{aligned}
\label{eq:sigma_test_case}
\end{equation}

where $\mathbf{K}_{\eta}^{+}$ is the set of $\mathbf{k}=\left(  k_{1}%
,k_{2}\right)  \in\mathbf{K}_{\eta}$ such that either $\left\{  k_{1}%
>0\right\}  $ or $\left\{  k_{1}=0,k_{2}>0\right\}  $, and $\mathbf{K}_{\eta
}^{-}=-\mathbf{K}_{\eta}^{+}$. In this example, for each $\mathbf{k}$, one
has
\[
\left\langle \mathbf{\sigma}_{\mathbf{k}}\right\rangle ^{2}=\frac{1}{2\pi}%
\int_{0}^{2\pi}\sin^{2}tdt=\frac{1}{2}%
\]
hence
\begin{equation}
C_{\eta}=\sqrt{\frac{Card\left(  \mathbf{K}_{\eta}\right)  }{2}}\sim
\frac{\sqrt{3\pi}}{2\sqrt{2}\eta}\label{dubbio}%
\end{equation}
for small $\eta$ (because $Card\left(  \mathbf{K}_{\eta}\right)  \sim\pi
\frac{1}{\eta^{2}}-\pi\frac{1}{4\eta^{2}}=\pi\frac{3}{4\eta^{2}}$). 

As already said, we choose to describe the persistency of the structures by
independent FGN processes $\frac{dB_{t}^{H,\mathbf{k}}}{dt}$ with
\[
H\geq\frac{1}{2}.
\]
Concerning the interpretation of the product rule $\mathbf{u}\cdot\nabla T$
and the analogous product rule in equation (\ref{SDE}) below, if $H=\frac
{1}{2}$ (case of Brownian Motion) we use Stratonovich interpretation; if
$H>\frac{1}{2}$, we use Young integrals, which also are Stratonovich
integrals, in a sense, if compared to Skorohod ones (see [Nualart]). In both
cases, when necessary, we use the notation $\circ$ to recall that we use
Stratonovich interpretation.

\subsection{Transported quantities}

By solution $T\left(  \mathbf{x},t\right)  $ of the transport equation above
we mean, by definition, the stochastic process $T\left(  \mathbf{x},t\right)
$ uniquely identified by the formula%
\begin{equation}
T\left(  \mathbf{X}_{t}^{\mathbf{x}},t\right)  =T_{0}\left(  \mathbf{x}%
\right)  \label{Lagrangian formulation}%
\end{equation}
where $\mathbf{X}_{t}^{\mathbf{x}}$ is the solution of the equations of
characteristics%
\begin{equation}
d\mathbf{X}_{t}^{\mathbf{x}}=uC\left(  \eta,\tau,H\right)  \sum_{\mathbf{k}%
\in\mathbf{K}_{\eta}}\mathbf{\sigma}_{\mathbf{k}}\left(  \mathbf{X}%
_{t}^{\mathbf{x}}\right)  \circ dB_{t}^{H,\mathbf{k}},\qquad\mathbf{X}%
_{0}^{\mathbf{x}}=\mathbf{x.}\label{SDE}%
\end{equation}
The Lagrangian formulation can be proved to be equivalent to the SPDE above in
many cases. Here, for simplicity, we take it as the starting point, also
because we shall use numerical methods based on the Lagrangian formulation.

The key information we are interested in is how fast the information is
spread, diffused, by the velocity field. Therefore the key indicator is the
function%
\begin{equation}
t\mapsto\mathbb{E}\left[  \left\vert \mathbf{X}_{t}^{\mathbf{0}}\right\vert
^{2}\right]  .\label{key quantity}%
\end{equation}
In Appendix \ref{sec:app_b} we discuss the more general problem of understanding $T\left(
\mathbf{x},t\right)  $, but we restrict the numerical simulations and the
result to the quantity (\ref{key quantity}). Moreover, we give some
theoretical a priori information on some of these quantities, that can be used
to check the validity of the numerical simulations.

\subsection{Main results}\label{sec:main_res}

We here outline a synthetic description of the main results.
They will be described in detail in Section \ref{sec:numerical}.

As a preliminary step we simulate the process $\mathbf{X}_{t}^{\mathbf{0}}$, solution of
equation (\ref{SDE}), in the \textit{control case}, we obviously get a FBM, as
also described below in Appendix \ref{sec:app_b}. We use the control case to compute (\ref{key quantity}) for $H>1/2$ 
and validate our numerical simulation against theory. 

The main discovery of this paper is that, when we simulate the \textit{test
case}, with $H>1/2$, after a short transient the process behaves like a
Brownian Motion. The memory related to $H$ is lost. A trace of $H$ remains in
the diffusion coefficient:%
\[
\sigma^{2}\left(  \eta,\tau,H\right)  \sim\frac{\mathbb{E}\left[
\left\vert \mathbf{X}_{t}^{\mathbf{0}}\right\vert ^{2}\right]  }{t}\text{ for
}t\text{ large enough.}%
\]
We have not found a theoretical proof of this fact until now, but the heuristic reason is
relatively clear: the particle $\mathbf{X}_{t}^{\mathbf{0}}$ feels at time some components 
of the noise more than others, and changes the most
relevant components frequently, in its erratic motion. But different
components have independent processes: this restores a form of independence of
the increments, like for Brownian Motion. The numerical simulations of the present paper seem to indicate the validity
of the following theoretical result. 
\begin{theorem}\label{theo_main}
Given the value of all other coefficients, choose%
\[
\tau=\tau_{\eta}=C\eta^{\frac{1-2H}{1-H}}%
\]
for some constant $C>0$. Then the process $\mathbf{X}_{t}^{\mathbf{0}}$, which
depends on $\eta>0$, converges in law to a 2-dimensional Brownian Motion in
the limit as $\eta\rightarrow0$.
\end{theorem}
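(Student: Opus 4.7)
The plan is to pass through the Ornstein--Uhlenbeck-smoothed velocity field $\mathbf{u}_{\tau}$ of (\ref{smooth velocity}), for which the pair $(\mathbf{X}_{t}^{\eta},(Z_{t}^{\tau,H,\mathbf{k}})_{\mathbf{k}})$ is Markov and the particle SDE is driven by smooth coefficients, and then carry out a homogenization / diffusion-limit argument along the diagonal $\tau=\tau_{\eta}$ as $\eta\to 0$. The first step is to justify this reduction: by Lemma~\ref{Lemma_main} together with (\ref{eq:OU_surr}), the classical-SDE solution $\mathbf{X}_{t}^{\eta}$ driven by $\mathbf{u}_{\tau}$ and the Young-integral solution of (\ref{SDE}) coincide in law in the limit $\tau_{\eta}\to 0$, with an error that can be quantified in powers of $\tau$; it is therefore enough to establish Brownian convergence for the OU-driven particle, indexed by a single parameter~$\eta$.

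For that reduced problem I would first establish tightness of $\{\mathbf{X}_{\cdot}^{\eta}\}_{\eta>0}$ in $C([0,T];\mathbb{R}^{2})$. Since each $\mathbf{\sigma}_{\mathbf{k}}$ is uniformly bounded and divergence-free, $C_{\eta}$ is chosen so that $\mathbb{E}[|\mathbf{u}_{\tau}|^{2}]=u^{2}$, and each $Z_{t}^{\tau,H,\mathbf{k}}$ is a centred stationary Gaussian with variance bounded by~$1$, one obtains moment bounds of the form $\mathbb{E}[|\mathbf{X}_{t}^{\eta}-\mathbf{X}_{s}^{\eta}|^{2p}]\leq K_{p}|t-s|^{p}$ uniformly in $\eta$, and Kolmogorov's criterion yields relative compactness. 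The central step is then the identification of the limit generator via a slow/fast perturbed-test-function argument: for a smooth $f$, solve a Poisson equation on the fibers of the OU environment to remove the fluctuating drift appearing in the Dynkin martingale of $f(\mathbf{X}_{t}^{\eta})$, producing a corrector of size $O(\tau_{\eta})$ plus a remainder martingale whose quadratic variation is governed by the spatial tensor $\sum_{\mathbf{k}}\mathbf{\sigma}_{\mathbf{k}}\otimes\mathbf{\sigma}_{\mathbf{k}}$ averaged against the OU covariance. The precise scaling $\tau_{\eta}=C\eta^{(1-2H)/(1-H)}$ is exactly what balances the three competing factors $\tau^{2(1-H)}$ coming from $C(\eta,\tau,H)^{2}$, $|\mathbf{K}_{\eta}|/C_{\eta}^{2}=O(1)$ coming from the spatial averaging of the Fourier modes, and the cell-crossing time $\eta/u$ that controls how fast the particle resamples the environment; under this choice the limiting rate $\sigma_{\infty}^{2}$ is both positive and finite. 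In the test case, isotropy of $\mathbf{K}_{\eta}$ makes the averaged tensor a scalar multiple of the identity, so the limiting generator is $\tfrac{1}{2}\sigma_{\infty}^{2}\Delta$; uniqueness of the martingale problem for the constant-coefficient Laplacian identifies every subsequential weak limit as a Brownian motion, and the first step transfers the conclusion back to the original Young SDE.

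\textbf{Main obstacle.} The genuinely hard point is the Poisson / corrector step when $H>\tfrac{1}{2}$. For $H=\tfrac{1}{2}$ the OU environment is exponentially mixing and the fiberwise Poisson equation is solved by standard spectral theory, reducing everything to classical homogenization of a divergence-free drift. For $H>\tfrac{1}{2}$, by contrast, $Z_{t}^{\tau,H,\mathbf{k}}$ has only polynomially decaying covariance and is not Markov at fixed $\tau$ unless one adjoins the full history of $B_{t}^{H,\mathbf{k}}$. The Brownian behaviour must therefore emerge not from forgetting the fractional memory within a single component, but from the particle switching, as $\eta\to 0$, between asymptotically independent Fourier modes $B^{H,\mathbf{k}}$; making this quantitative requires controlling the cross-correlations $\mathbb{E}[Z_{s}^{\tau,H,\mathbf{k}}Z_{t}^{\tau,H,\mathbf{k}}]$ evaluated along the Lagrangian trajectory and showing that they integrate to a finite diffusion coefficient precisely under the scaling $\tau_{\eta}=C\eta^{(1-2H)/(1-H)}$. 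This is exactly the non-commutative obstruction left open in \cite{FlaRusso} and would be the heart of a rigorous proof; our plan delivers a clean reduction to this single analytic estimate.
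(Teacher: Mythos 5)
Your plan cannot be compared against a proof in the paper, because the paper contains none: Theorem~\ref{theo_main} is stated there explicitly as an open conjecture (``We have not found a theoretical proof of this fact until now\dots We do not know whether the previous theorem is true or not''), supported only by the numerical verification of the oscillation claim (\ref{eq:main_claim}) and by the heuristic random-walk derivation of the diffusion constant in Appendix~\ref{sec:sigma_H_formula}. Measured against that standard, your text is likewise not a proof but a research program, and you concede as much: your ``Main obstacle'' paragraph acknowledges that the corrector/Poisson step for $H>\frac12$ --- quantifying the decorrelation of the non-Markovian fractional environment along the Lagrangian trajectory --- is unresolved, and this is exactly the non-commutative obstruction the paper attributes to \cite{FlaRusso}. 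What your outline does correctly is reproduce the paper's own heuristics: your ``balancing of three factors'' with the cell-crossing time $\eta/u$ as the resampling time is the random-walk argument of Appendix~\ref{sec:sigma_H_formula}, and the statement that the scaling $\tau_\eta=C\eta^{(1-2H)/(1-H)}$ renders the limiting rate finite and positive is the content of the Remark following the theorem, where $\tau_{\eta}^{\frac{1}{2H}-\frac{1}{2}}\eta^{1-\frac{1}{2H}}$ is observed to be constant. Identifying the right heuristic is not the same as closing the gap it leaves.

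Beyond the admitted incompleteness, your reduction step contains a concrete error. For $H>\frac12$ the exponent $\frac{1-2H}{1-H}$ is \emph{negative}, so along the theorem's scaling $\tau_\eta\rightarrow\infty$ as $\eta\rightarrow0$, not $\tau_\eta\rightarrow0$ as your first paragraph asserts. Consequently the error bound $\tau^{2-2H}$ of Lemma~\ref{Lemma_main} diverges along the scaling; moreover that bound is only meaningful relative to the main term (whose variance is $c_H^2\tau^{2-2H}t^{2H}$) in the regime $t\gg\tau$, which fails for every fixed $t$ once $\tau_\eta\rightarrow\infty$. So the surrogate (\ref{eq:OU_surr}) is being invoked outside its regime of validity, and the claimed asymptotic coincidence in law between the particle driven by $\mathbf{u}_\tau$ of (\ref{smooth velocity}) and the Young solution of (\ref{SDE}) does not follow. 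Two further steps are asserted without justification: (i) even where $\tau$ is small, Lemma~\ref{Lemma_main} controls a single linear Wiener integral, and promoting $L^2$-closeness of the drivers to closeness in law of solutions of (\ref{SDE}) requires a stability estimate for Young equations whose constants involve $\left\Vert\nabla\mathbf{\sigma}_{\mathbf{k}}\right\Vert_{\infty}\sim\eta^{-1}$ and $Card(\mathbf{K}_\eta)\sim\eta^{-2}$, uniformly along the scaling; (ii) the tightness bound $\mathbb{E}\left[\left\vert\mathbf{X}_t^{\eta}-\mathbf{X}_s^{\eta}\right\vert^{2p}\right]\leq K_p\left\vert t-s\right\vert^{p}$ cannot be obtained by the Gaussian second-moment reasoning you sketch, since for $H>\frac12$ the integrals are Young integrals with history-dependent integrands and no isometry is available. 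The honest conclusion is that your proposal is consistent with the paper's heuristic picture but leaves the theorem exactly as open as the paper does.
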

\begin{remark}
Notice that the factor $\tau_{\eta}^{\frac{1}{2H}-\frac{1}{2}}\eta^{1-\frac
{1}{2H}}$ in formula (15) is constant as $\eta\rightarrow0$, under the
condition of the Theorem. 
\end{remark}
Our numerical simulations indicate a Brownian behavior already for finite
$\eta$, but it cannot be strictly true, since - in spite of the explanation
given in Appendix \ref{sec:sigma_H_formula} - the process "feels" the presence of all the finite
number of fractional processes all the time, hence there is certainly a
residual of memory, although numerically very small. In the limit when
$\eta\rightarrow0$ the number of "vortex structures" of the noise goes to
infinity, the process "jumps" for one to the other and the approximate
property of independent increments observed for finite $\eta$ may become strict.

We do not know whether the previous theorem is true or not and hope its
statement will trigger research on it. For the time being, we offer the
following very partial numerical verification. 

Let us introduce the quantity
\[
\sigma_{t}^{2}\left(  \eta,\tau,H\right)  =\frac{\mathbb{E}\left[  \left\vert
\mathbf{X}_{t}^{\mathbf{0}}\right\vert ^{2}\right]  }{t},\qquad\text{for }t>0
\]
and its oscillation on a generic interval $\left[  t_{0},t_{1}\right]
\subset\left(  0,\infty\right)  $
\[
\Delta\left(  t_{0},t_{1},\eta,\tau,H\right)  =\sup_{t\in\left[  t_{0}%
,t_{1}\right]  }\sigma_{t}^{2}\left(  \eta,\tau,H\right)  -\inf_{t\in\left[
t_{0},t_{1}\right]  }\sigma_{t}^{2}\left(  \eta,\tau,H\right)  .
\]
We claim (in the numerical sense)

\begin{claim}
For $\tau=\tau_{\eta}=\eta^{\frac{1-2H}{1-H}}$, for every $\left[  t_{0}%
,t_{1}\right]  \subset\left(  0,\infty\right)  $,
\begin{equation}
\lim_{\eta\rightarrow0}\Delta\left(  t_{0},t_{1},\eta,\tau_{\eta},H\right)
=0.
\label{eq:main_claim}
\end{equation}
\end{claim}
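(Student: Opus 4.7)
The plan is to reduce the Young-integral equation (\ref{SDE}) to its smooth Ornstein-Uhlenbeck companion (\ref{smooth velocity})-(\ref{OU}), whose solution $\mathbf{X}_t^\eta$ satisfies the classical ODE
\[
\dot{\mathbf{X}}_t^\eta \;=\; \frac{u}{C_\eta}\sum_{\mathbf{k}\in\mathbf{K}_\eta}\sigma_{\mathbf{k}}(\mathbf{X}_t^\eta)\,Z_t^{\tau_\eta,H,\mathbf{k}},
\]
and then prove a diffusion-approximation statement in the joint limit $\eta,\tau_\eta\to 0$ coupled by $\tau_\eta=\eta^{(1-2H)/(1-H)}$. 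The Remark following Theorem \ref{theo_main} already signals that this is the unique scaling which leaves the dimensionless velocity-amplitude factor constant, and is therefore the correct regime in which to expect a non-trivial Brownian limit.

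The second moment would be attacked through the Taylor-Green-Kubo representation
\[
\mathbb{E}\bigl[|\mathbf{X}_t^\eta|^2\bigr] \;=\; 2\int_0^t\!\!\int_0^s \mathbb{E}\bigl[\mathbf{u}_{\tau_\eta}(\mathbf{X}_\sigma^\eta,\sigma)\cdot\mathbf{u}_{\tau_\eta}(\mathbf{X}_s^\eta,s)\bigr]\,d\sigma\,ds.
\]
Independence across modes $Z^{\tau_\eta,H,\mathbf{k}}$ factorises the integrand into a sum over $\mathbf{k}$ of a temporal covariance (decaying in $|s-\sigma|$ on scale $\tau_\eta$) times a spatial correlation $\sigma_{\mathbf{k}}(\mathbf{X}_\sigma^\eta)\cdot\sigma_{\mathbf{k}}(\mathbf{X}_s^\eta)$ (oscillating, in the test case (\ref{eq:sigma_test_case}), at frequency $|\mathbf{k}|\sim 1/\eta$). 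Heuristically the double integral concentrates on $|s-\sigma|$ of order $\min(\tau_\eta,\eta^2/\sigma_\infty^2)$, producing a linear-in-$t$ contribution $\sigma_\infty^2(H)\,t$. Matching these two decorrelation scales at the level of Kubo is precisely what singles out the Theorem's exponent $(1-2H)/(1-H)$.

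The step I expect to be the main obstacle is justifying the spatial decorrelation rigorously in the presence of feedback: the particle is advected by the same field whose mixing we must exploit, so $\mathbf{X}_\sigma^\eta$ and $\mathbf{X}_s^\eta$ are not independent of the noise at times $\sigma,s$. A plausible route is a two-scale split at a mesoscopic $\Delta_\eta$ with $\tau_\eta\ll\Delta_\eta\ll 1$: on $|s-\sigma|\le\Delta_\eta$ use H\"older regularity of FBM to bound $|\mathbf{X}_s^\eta-\mathbf{X}_\sigma^\eta|$ and exploit the oscillation of $\sigma_{\mathbf{k}}$ once the displacement exceeds $\eta$; on $|s-\sigma|>\Delta_\eta$ use quantitative Gaussian mixing of the $Z^{\tau_\eta,H,\mathbf{k}}$ together with Malliavin integration by parts to transfer the randomness of $\mathbf{X}_s^\eta$ off the observable. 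This is precisely the non-commutative regime left open in \cite{FlaRusso}, where the cross terms between different $\sigma_{\mathbf{k}}$ cannot be diagonalised by a closed-form computation.

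Finally, assuming the moment analysis yields $\mathbb{E}[|\mathbf{X}_t^\eta|^2]/t \to \sigma_\infty^2(H)$ pointwise in $t\in(0,\infty)$, the uniform conclusion (\ref{eq:main_claim}) on any $[t_0,t_1]\subset(0,\infty)$ follows by equicontinuity: the normalisation (\ref{def of C eta}) together with $\mathbb{E}[|Z_t^{\tau_\eta,H,\mathbf{k}}|^2]\to 1$ bounds the integrand in the Kubo formula uniformly in $\eta$, which makes the family $\{t\mapsto\sigma_t^2(\eta,\tau_\eta,H)\}_\eta$ uniformly Lipschitz on $[t_0,t_1]$, so an Arzel\`a-Ascoli argument promotes pointwise to uniform convergence and forces $\Delta(t_0,t_1,\eta,\tau_\eta,H)\to 0$.
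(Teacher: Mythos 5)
First, be aware of what the paper actually does here: it does \emph{not} prove this claim. The claim is explicitly advanced ``in the numerical sense'' --- the authors state they have not found a theoretical proof, Theorem \ref{theo_main} is posed as an open conjecture, and the only support offered is the simulated decay of $\Delta$ in Fig.~\ref{fig:delta_claim} plus the heuristic random-walk argument of Appendix \ref{sec:sigma_H_formula}. So your proposal cannot be measured against a paper proof; it must be judged as a candidate proof on its own, and as such it has genuine gaps. The decisive step --- rigorous spatial decorrelation of $\sigma_{\mathbf{k}}(\mathbf{X}^\eta_\sigma)\cdot\sigma_{\mathbf{k}}(\mathbf{X}^\eta_s)$ despite the feedback of the particle into the very field being averaged --- is exactly the non-commutative obstruction the paper identifies as unsolved (also in \cite{FlaRusso}), and you name tools (two-scale splitting, Malliavin integration by parts) without giving any argument that they close the estimate. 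A proposal whose central lemma is ``plausible route'' plus ``assuming the moment analysis yields'' is a research program, consistent with the paper's heuristics, but not a proof of (\ref{eq:main_claim}).

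Beyond the acknowledged gap there is a concrete step that would fail as written: you assert the temporal covariance of $Z^{\tau_\eta,H,\mathbf{k}}$ decays ``on scale $\tau_\eta$,'' but for $H>1/2$ the fractional Ornstein--Uhlenbeck process has long memory --- its stationary autocovariance decays only polynomially, like $H(2H-1)\tau_\eta^{2H}\left\vert s-\sigma\right\vert^{2H-2}$ (up to constants), which is \emph{not integrable} in time. Consequently the long-time block $\left\vert s-\sigma\right\vert>\Delta_\eta$ of your two-scale split cannot be controlled by ``quantitative Gaussian mixing of the $Z^{\tau_\eta,H,\mathbf{k}}$'': no useful temporal mixing rate exists, and the Green--Kubo integral converges (if at all) only because of spatial decorrelation --- i.e., the entire burden falls back on the unresolved feedback problem, making your split circular. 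Two further soft spots: the paper's model (\ref{SDE}) is the white-noise FGN surrogate, and Lemma \ref{Lemma_main} licenses the OU replacement only for a single Wiener integral, not for nonlinear functionals like $\int_0^t\sigma_{\mathbf{k}}(\mathbf{X}_s)\circ dB^{H,\mathbf{k}}_s$, so your opening reduction also needs an argument; and your Kubo representation of $\mathbb{E}[\vert\mathbf{X}^\eta_t\vert^2]$ is legitimate only for the smooth companion ODE, not directly for the Young/Stratonovich equation. Your final Arzel\`a--Ascoli step is fine conditionally (the normalization does give a uniform Lipschitz bound for $\sigma^2_t$ on $[t_0,t_1]$ with $t_0>0$), but everything upstream of it remains open --- which is precisely why the paper stops at numerics.
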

In Appendix \ref{sec:sigma_H_formula} we add further discussion to this problem.

\section{Numerical results\label{sec:numerical}}
In this section we perform numerical simulations of transport equation (\ref{eq:system}) with advection velocity $\mathbf{u}(\mathbf{x},t)$ given by the stochastic model (\ref{eq:model}). A Monte Carlo method is employed where particle trajectories are simulated by numerical integration of the equations of characteristics (\ref{SDE}). Expected values and probabilities are thus approximated by appropriate ensemble averages. To illustrate the physics captured by our stochastic model we consider $H=0.7$. A simple explicit Euler method is used in all simulations to discretise time. The converge properties of Euler's method in the range $H>0.5$ can be found at \cite{Nualart}. 

As a preliminary step, we validate our numerical code against the theoretical predictions of the control case (see Appendix \ref{app_sec_control}). The velocity $u$ and the relaxation time $\tau_{\eta}$ are set to $1$ and $10^{-2}$, respectively. In the left panel of Fig. \ref{fig:results_control}  the probability $\mathbb{P}[|x(t)-x(0)| < R]$ is computed over $10^4$ realisations as a function of time (solid line) and compared with formula (\ref{eq:prob_th_control}) represented by the dashed line. Evidently, for $t \gg \tau_{\eta}$ the theoretical prediction and the numerical result overlap. In the right panel of Fig. \ref{fig:results_control} we report $\mathbb{E}[|x(t)-x(0)|^2]$ as a function of time for the same test case. Analogously, the numerical values (solid line) match the exact formula (dashed line) for $t \gg \tau_{\eta}$ as predicted. 

\begin{figure}[hbt!]
\centering
\begin{subfigure}[b]{0.45\textwidth}
\includegraphics[width=1\textwidth]{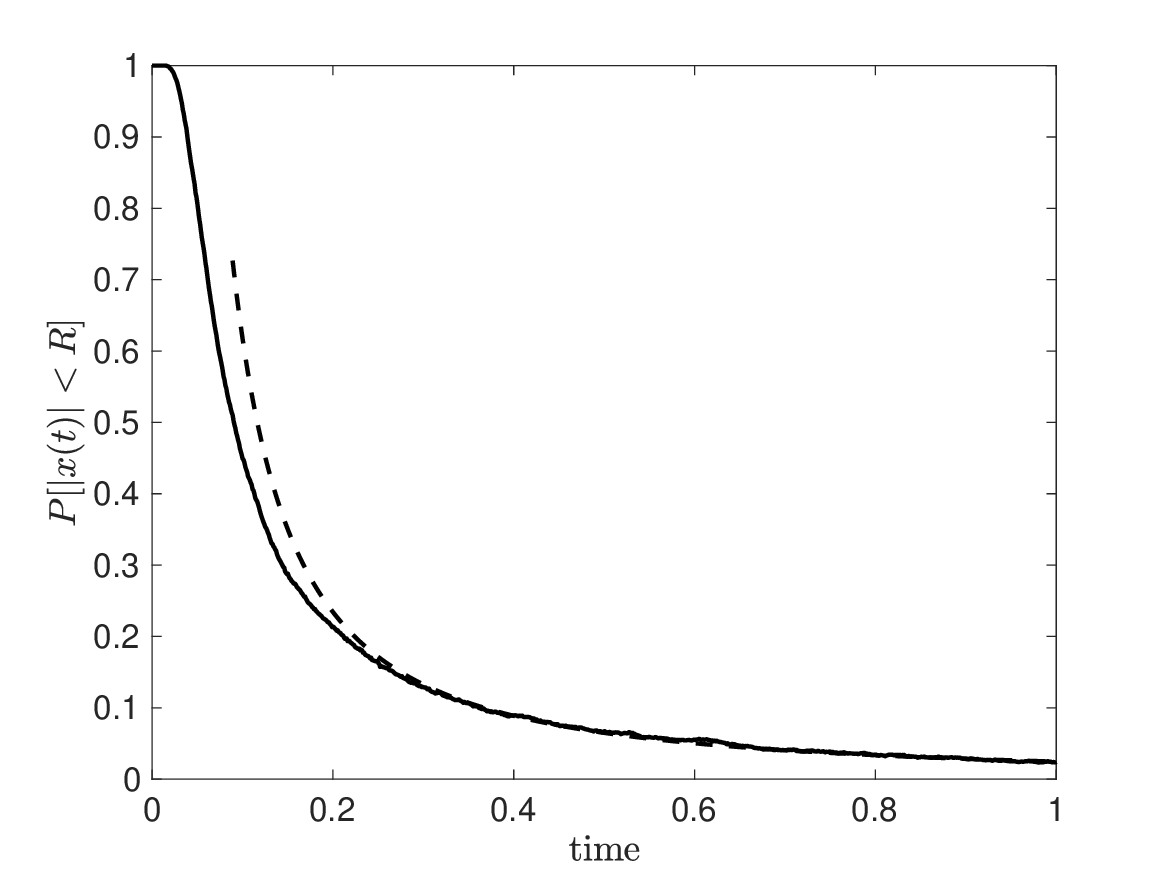}
\end{subfigure}
\hfill
\begin{subfigure}[b]{0.45\textwidth}
\includegraphics[width=\textwidth]{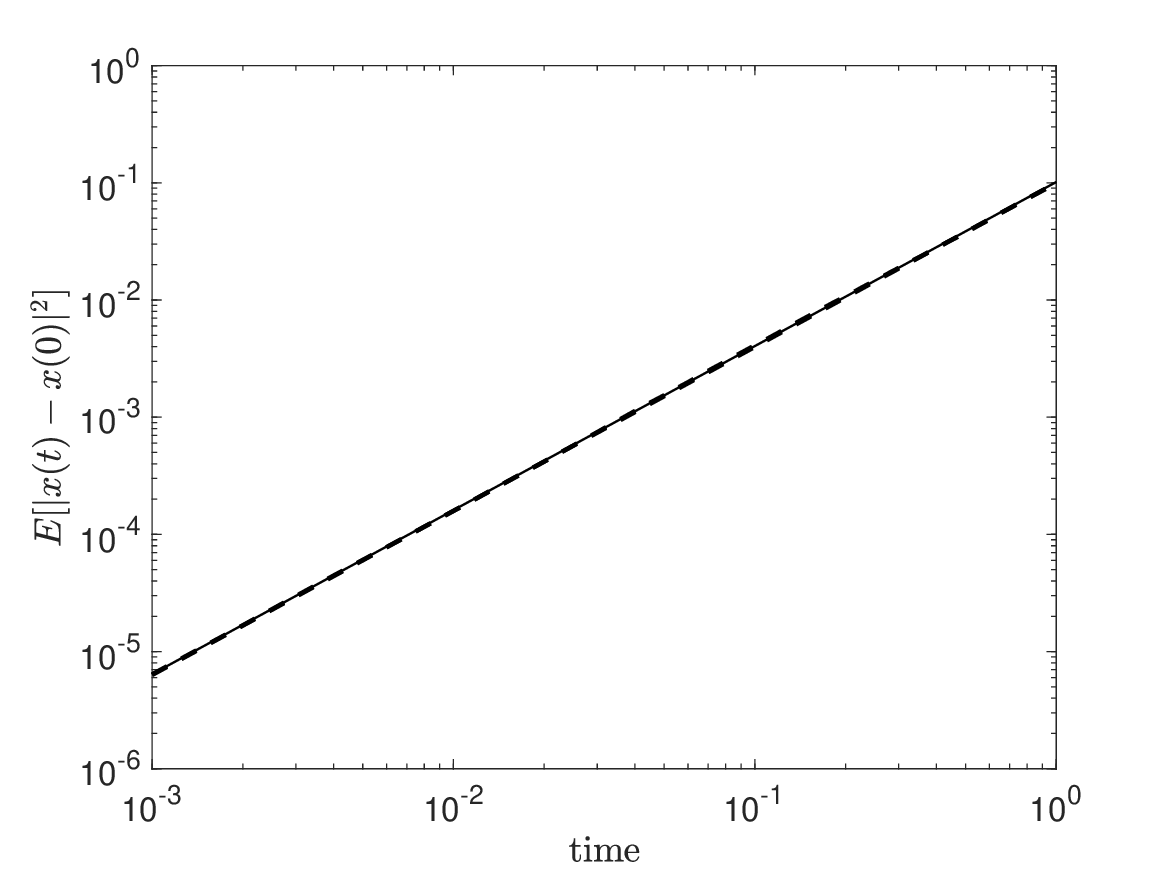}
\end{subfigure}
\caption{Probability $\mathbb{P}[|x(t)-x(0)| < R]$ (left panel) and variance $\mathbb{E}[|x(t)-x(0)|^2]$ (right panel) for the control case as a function of time. Numerical values are represented by the solid lines while the exact formulae are represented by the dashed lines.}
\label{fig:results_control}
\end{figure}

Having validated our numerical code, we move on to simulate non-trivial vector fields $\mathbf{\sigma}_{\mathbf{k}}$. In particular, we consider the vector field defined by (\ref{eq:sigma_test_case}), which represents a random perturbation concentrated at a length scale $\eta$. Having set $H=0.7$ we expect a particle transported by such $\mathbf{u}(\mathbf{x},t)$ to feel the ``memory effect'' due to correlated Brownian increments. An interesting question is then for how long this physical mechanism is maintained along a particle trajectory. To this aim we consider three values of $\eta=2\pi/20,2\pi/100,2\pi/200$. Reference velocity $u$ and relaxation time $\tau$ are set to $2$ and $10^{-2}$, respectively. Statistics are collected over $10^4$ realisations. Fig. \ref{fig:var_test_case} shows the variance $\mathbb{E}[|x(t)-x(0)|^2]$ as a function of time for the simulated values of $\eta$. 
\begin{figure}[hbt!]
\centering
\begin{subfigure}[b]{0.45\textwidth}
\includegraphics[width=1\textwidth]{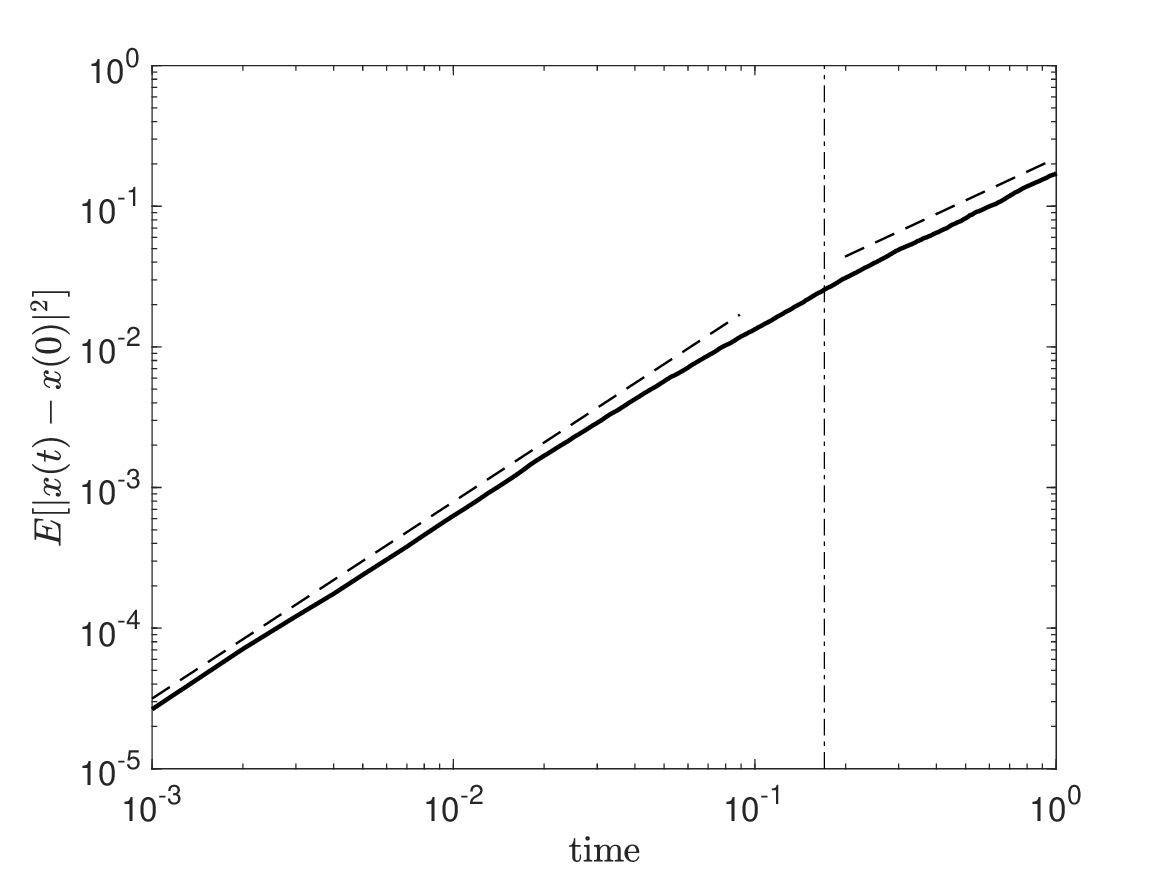}
\end{subfigure}
\hfill
\begin{subfigure}[b]{0.45\textwidth}
\includegraphics[width=\textwidth]{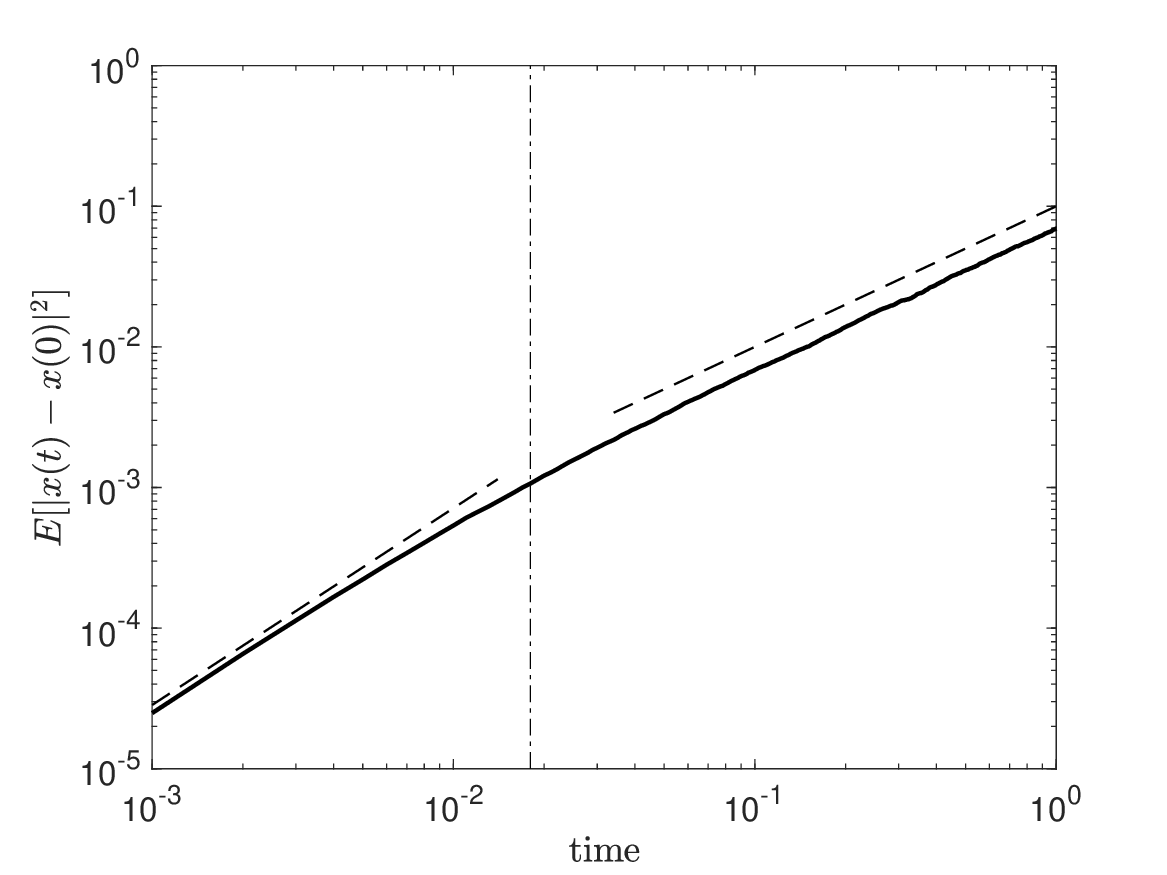}
\end{subfigure}
\hfill
\begin{subfigure}[b]{0.45\textwidth}
\includegraphics[width=\textwidth]{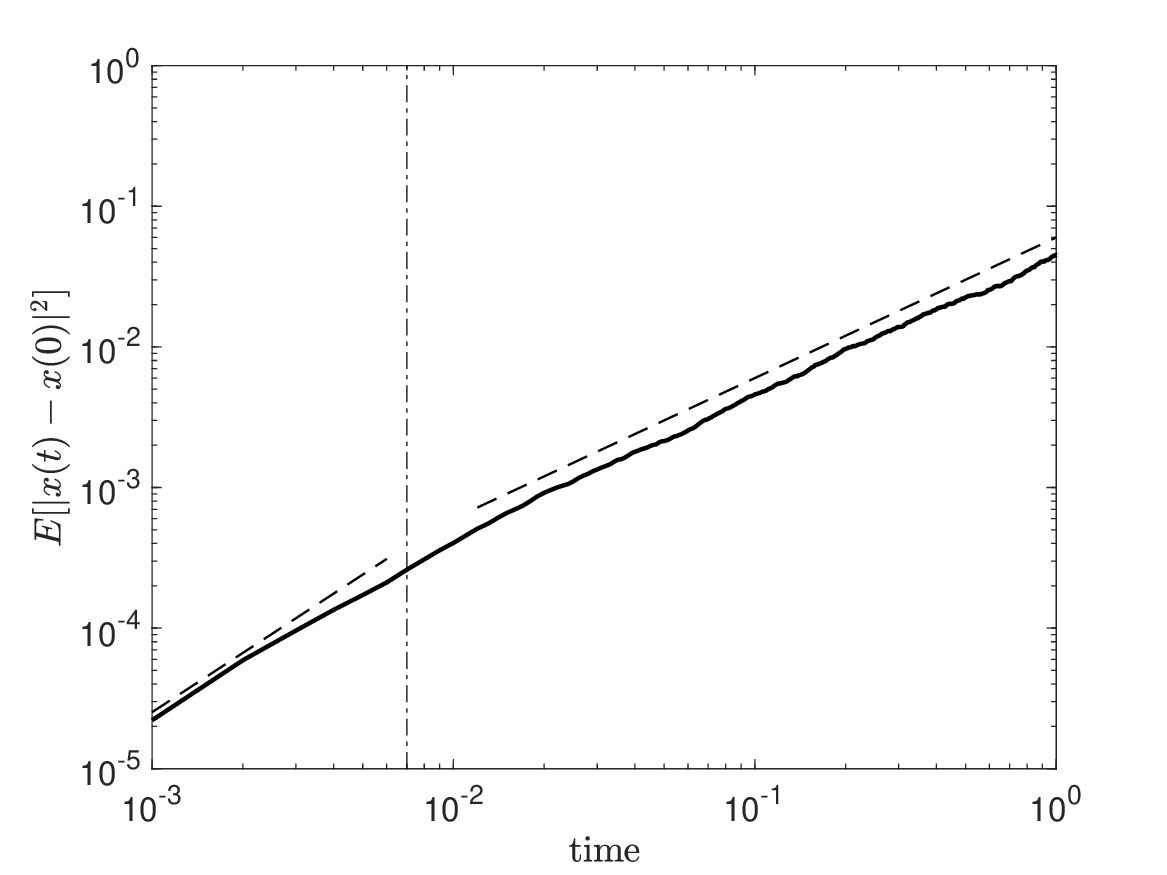}
\end{subfigure}
\caption{Variance $\mathbb{E}[|x(t)-x(0)|^2]$ for $\eta=\pi/20$ (left panel), $\eta=\pi/100$ (right panel) and $\eta=\pi/200$ (bottom panel) as a function of time. The dashed lines represents the slopes $2H$ and $1$. The dash-dotted vertical line represents $t=t^*$.}
\label{fig:var_test_case}
\end{figure}
This numerical test clearly highlights the presence of two regimes. For small times the Fractional Brownian Noise is dominant as a slope equal to $2H$ in the variance highlights. For larger times the classical Brownian Motion prevail restoring the variance scaling to $1$. Moreover, the time scale $t^*$ of departure between the two slopes decreases with $\eta$ and its value depends on the choice of parameters $C(\eta,\tau,H)$ and $u$. For $t \ll t^*$ the vector fields $\mathbf{\sigma}_{\mathbf{k}}$ are approximately constant and the resulting motion is clearly an FBM. As time increases, i.e. the particle travels a distance of order $\eta$, the particle is selectively affected by independent components of the noise thus restoring Brownian Motion, as heuristically motivated in Sec. \ref{sec:main_res}. In the plots of Fig. \ref{fig:var_test_case}) $t^*$ (vertical lines in Fig. \ref{fig:var_test_case}) is set to be the time at which the particle has travelled approximately a distance of $\eta/2$. 

The standard deviation of the Brownian Motion established in the region for $t \gg t^*$, called $\sigma_H$, can be estimated by a diffusive approximation of the fractional Brownian Motion as detailed in Appendix \ref{sec:sigma_H_formula}. In Fig. \ref{fig:sigma_H} $\sigma_H$ computed from numerical simulation is shown as function of $\eta$ (dots) and compared against expression (\ref{eq:sigma_H}) (solid line). 
\begin{figure}[hbt!]
\centering
\includegraphics[width=0.6\textwidth]{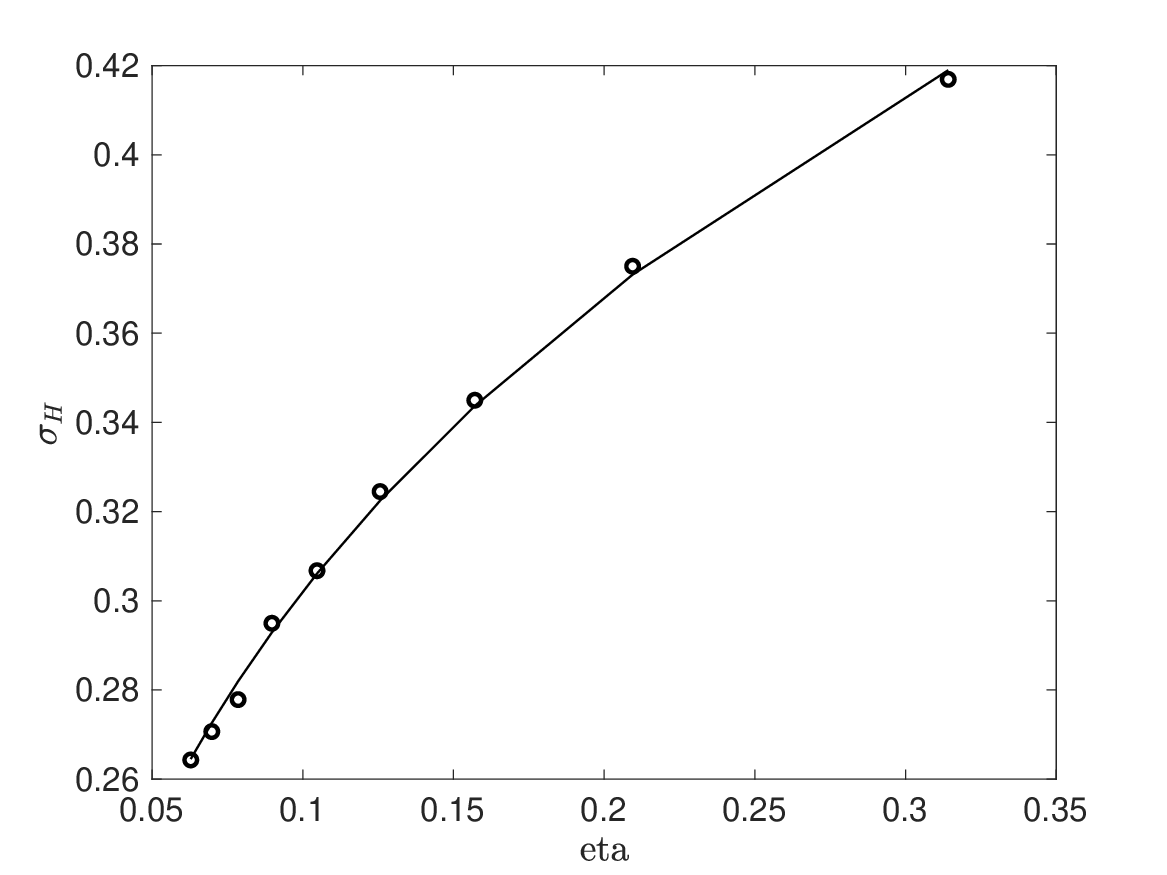}
\caption{Standard deviation $\sigma_H$ as a function of $\eta$ computed numerically (dots) and analytically (solid line) using (\ref{eq:sigma_H}).}
\label{fig:sigma_H}
\end{figure}
A good agreement of the theoretical prediction with the numerical results is found. The free parameter $\lambda$ computed using a least-squares method is approximately $0.47$. The latter is, however, not a universal constant but rather it has been found dependent on $\mathbf{\sigma}_{\mathbf{k}}$. To show this behaviour we generalise (\ref{eq:sigma_test_case}) to
\begin{equation}
\begin{aligned}
&\mathbf{\sigma}_{\mathbf{k}}\left(  \mathbf{x}\right)  =\frac{\mathbf{k}%
^{\perp}}{\left\vert \mathbf{k}\right\vert }g\left( \cos\left(  \mathbf{k}%
\cdot\mathbf{x}\right) \right)  \qquad\text{if }\mathbf{k\in K}_{\eta}^{+}\text{ }%
\\
&\mathbf{\sigma}_{\mathbf{k}}\left(  \mathbf{x}\right)  =\frac{\mathbf{k}%
^{\perp}}{\left\vert \mathbf{k}\right\vert } g\left( \sin\left(  \mathbf{k}%
\cdot\mathbf{x}\right) \right) \qquad\text{if }\mathbf{k\in K}_{\eta}^{-}%
\end{aligned}
\label{eq:sigma_test_case_tanh}
\end{equation}
where
\begin{equation*}
g(r) = \tanh(Mr),
\end{equation*}
is an approximation of the sign function with smoothness controlled by the parameter $M$. By repeating the above procedure we find $\lambda \approx 0.41$ for $M=10$. We thus conclude that there exists indeed a constant $\lambda$, but it is not universal. 

A concise way to state our results is formulated by Claim (\ref{eq:main_claim}). While proving this statement is challenging, we here limit ourselves to simulate the oscillation $\Delta(t_0,t_1,\eta,\tau,H)$ for decreasing values of $\eta$.
\begin{figure}[hbt!]
\centering
\includegraphics[width=0.6\textwidth]{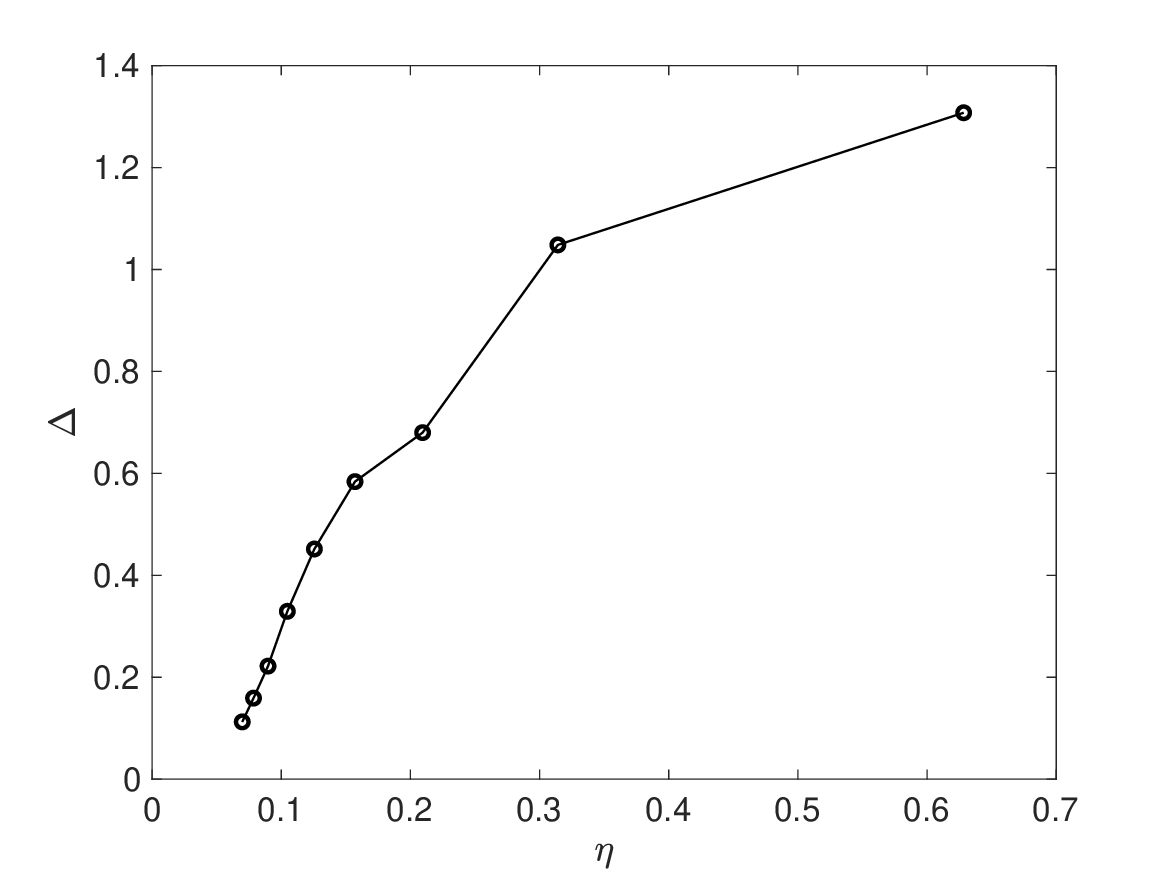}
\caption{Oscillation $\Delta(t_0,t_1,\eta,\tau,H)$ as a function of $\eta$ computed numerically.}
\label{fig:delta_claim}
\end{figure}
The findings, reported in Fig. \ref{fig:delta_claim}, support our claim with $\Delta$ approaching zero as $\eta$ decreases. We remark, however, that the limit $\eta \to 0$ is computationally not feasible. To smaller $\eta$ there correspond larger radius in spectral space therefore increasing the terms to be computed, at each time step, in the expression of $\mathbf{\sigma}_{\mathbf{k}}$. Furthermore, the velocity components of smaller length-scale $\eta$ have larger gradients, which require a finer time step to be properly resolved. This, in practice, has limited the numerical investigation conducted here to $\eta \approx 7\cdot 10^{-2}$. Nevertheless, the trend is indeed in agreement with our predictions. 

As a final and more involved illustration of our stochastic model we consider the evolution of an ensemble of $N=10^3$ particles. The underlying vector field is again given by (\ref{eq:sigma_test_case}) where we set $\eta=2\pi/20$. The particles are released at $t=0$ according to a uniform random distribution in a circle of radius $R=\eta/8$ centred at the origin. A number of $10^3$ realisations is simulated and the variance $\text{VAR}_N(t)=\frac{1}{N}\sum_{i=1}^N \mathbb{E}[|\mathbf{x}_i(t)|^2]$ is computed over time. The qualitative behaviour of this swarm of particles is shown in Fig. \ref{fig:multip_times} where the simulated particle positions are drown at different times. For $t \ll t^*$ (top-right panel) the particles are pushed around the origin by an approximately spatially uniform velocity, thus maintaining an almost circular shape. For times of order $t^*$ (mid-left and mid-right panel) particles tend to agglomerate in regions where the vector field components $\sigma_\mathbf{k}$ sum to zero, which are characterised by a typical length scale of order $\eta$. Finally, for $t \gg t^*$ (bottom-right and bottom-left panel) the particles are picked up by different and independent waves composing the noise and thus loosing all structures inserted in the initial conditions. 

\begin{figure}[hbt!]
\centering
\begin{subfigure}[b]{0.45\textwidth}
\includegraphics[width=1\textwidth]{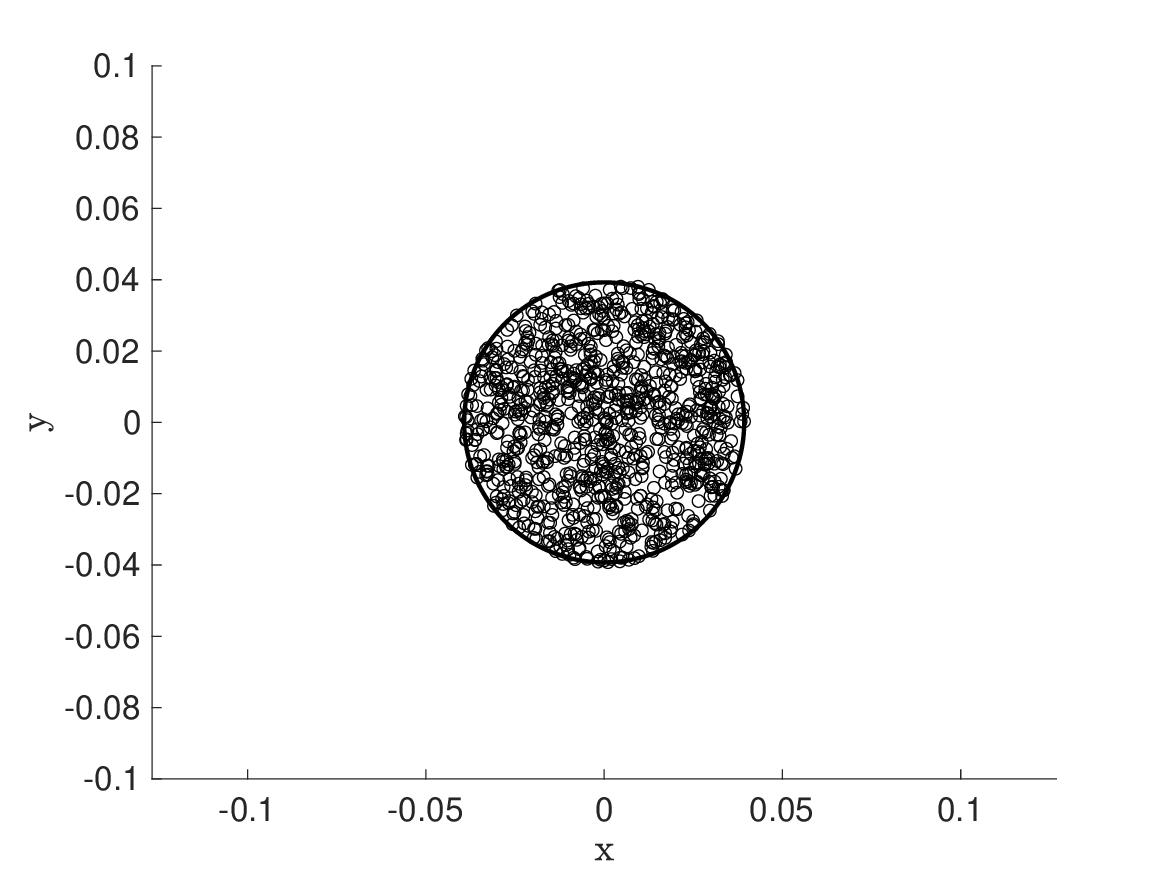}
\end{subfigure}
\hfill
\begin{subfigure}[b]{0.45\textwidth}
\includegraphics[width=\textwidth]{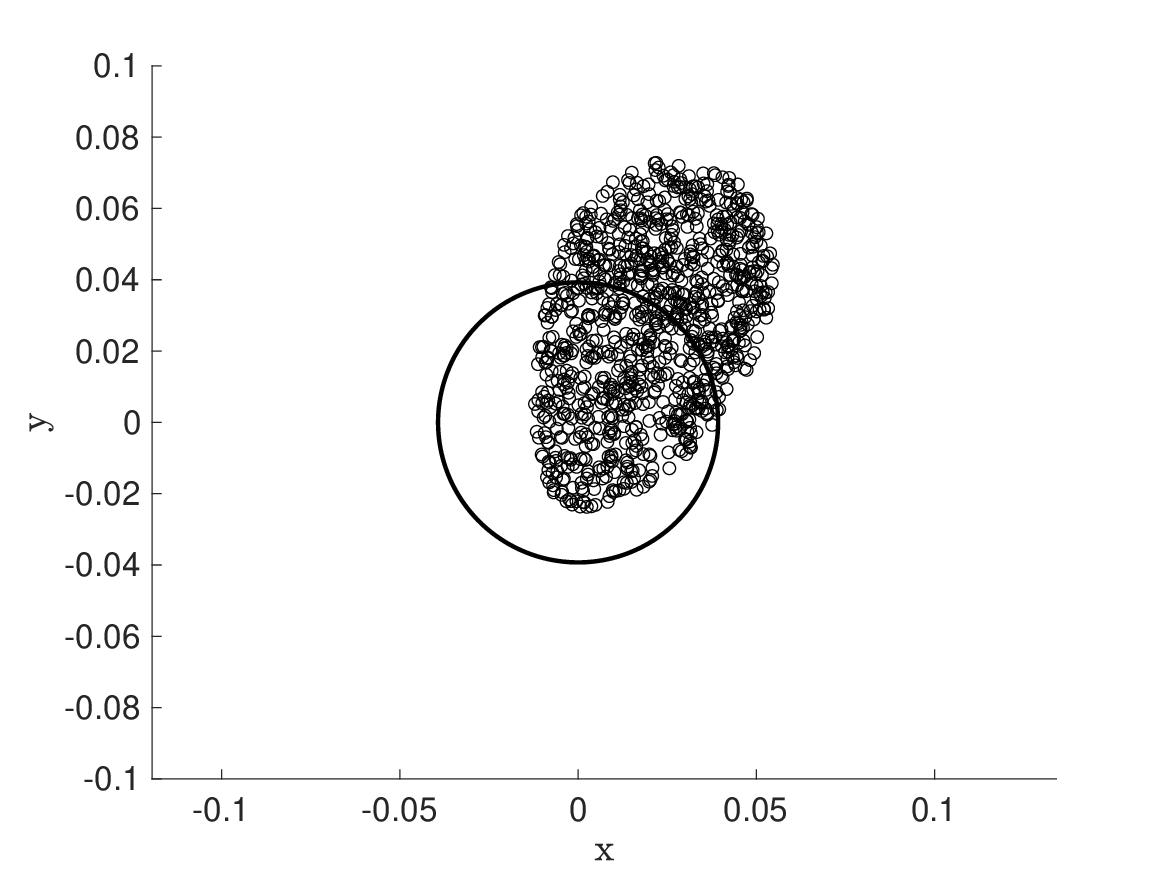}
\end{subfigure}
\hfill
\begin{subfigure}[b]{0.45\textwidth}
\includegraphics[width=\textwidth]{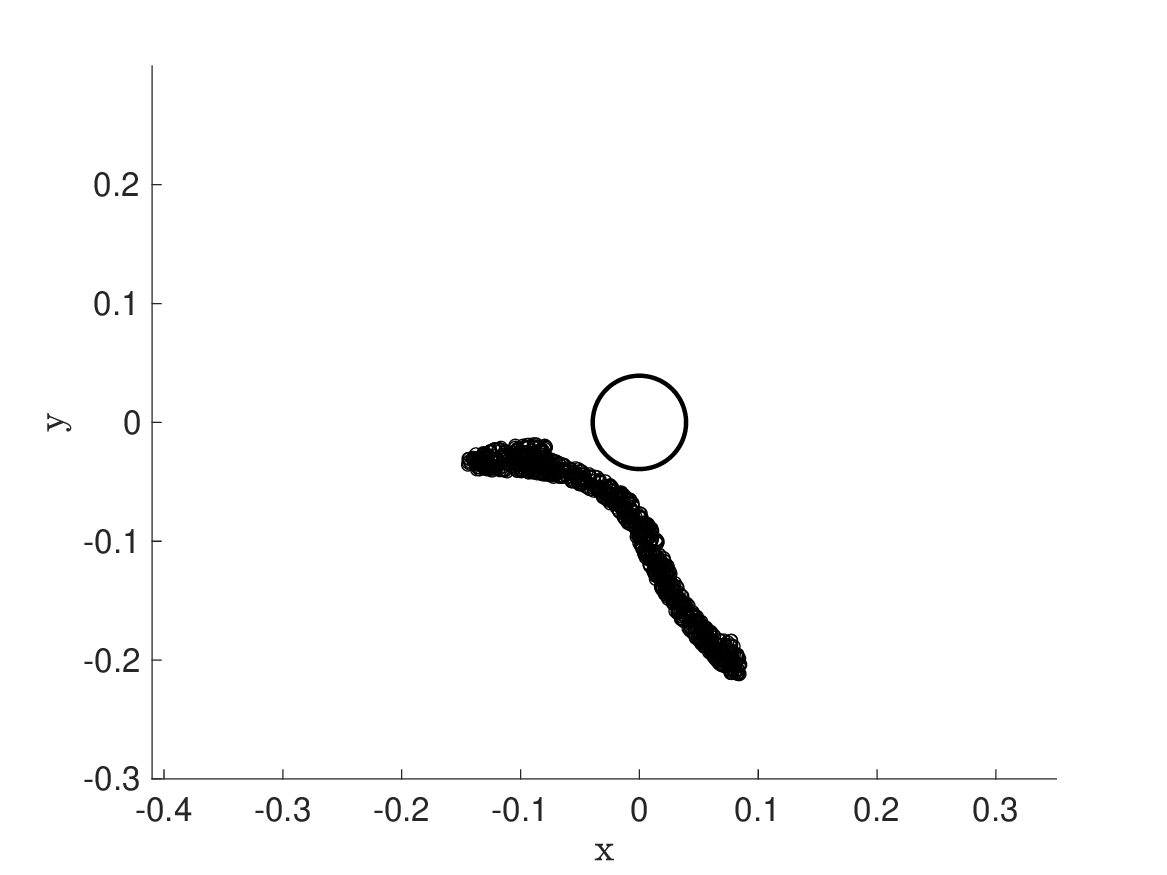}
\end{subfigure}
\hfill
\begin{subfigure}[b]{0.45\textwidth}
\includegraphics[width=\textwidth]{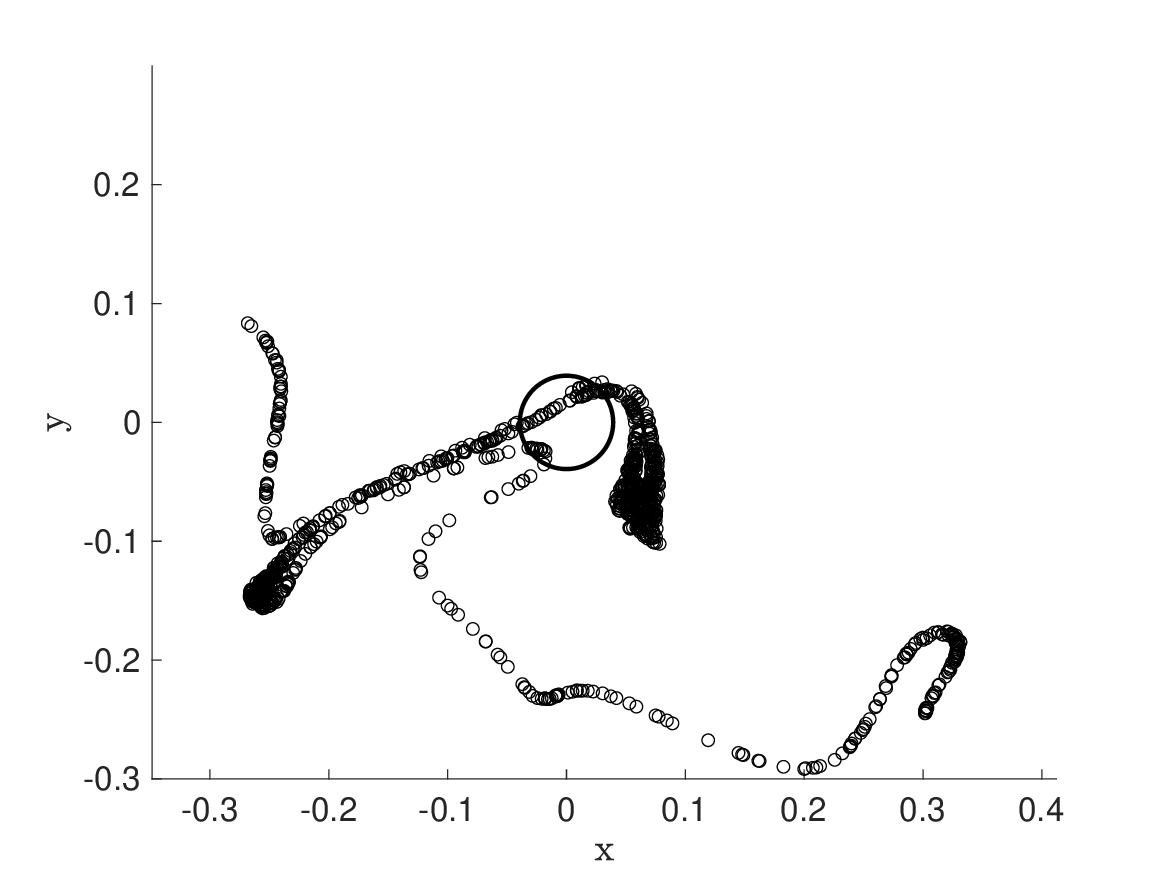}
\end{subfigure}
\hfill
\begin{subfigure}[b]{0.45\textwidth}
\includegraphics[width=\textwidth]{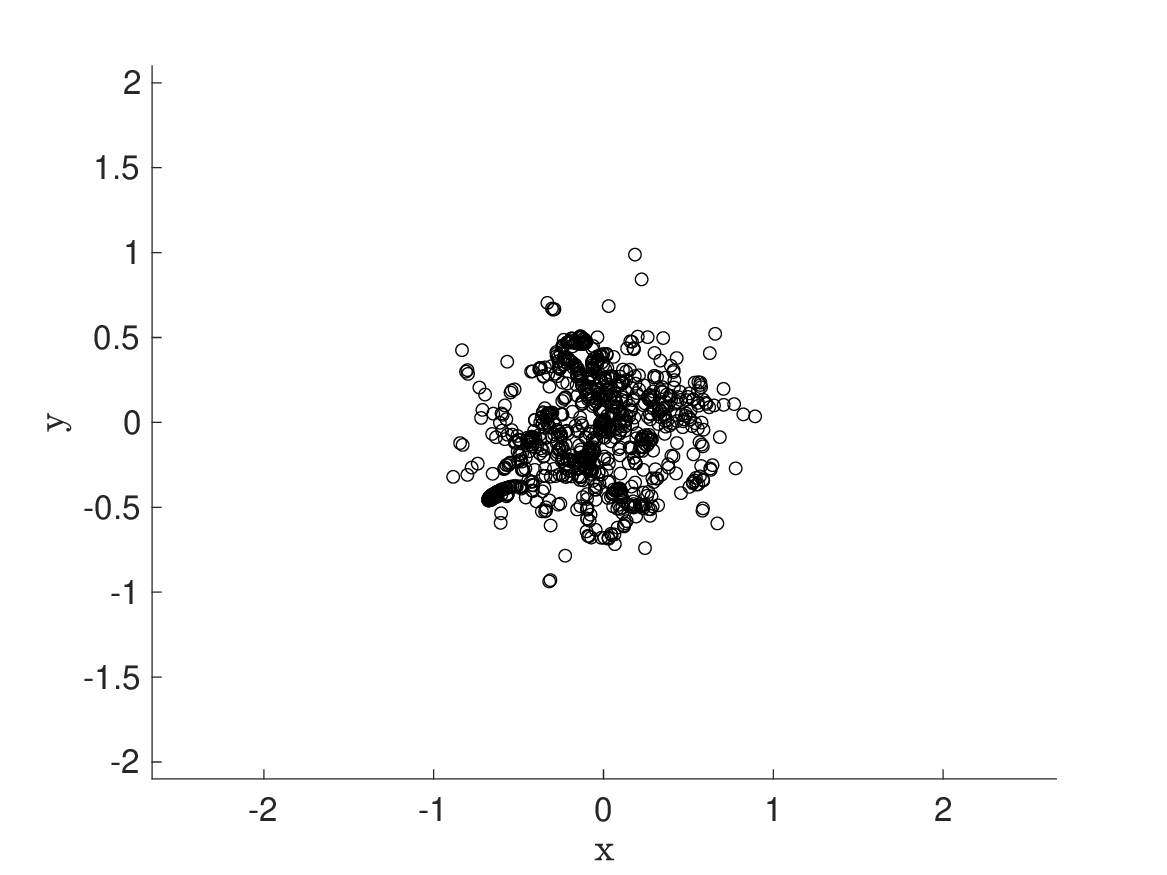}
\end{subfigure}
\hfill
\begin{subfigure}[b]{0.45\textwidth}
\includegraphics[width=\textwidth]{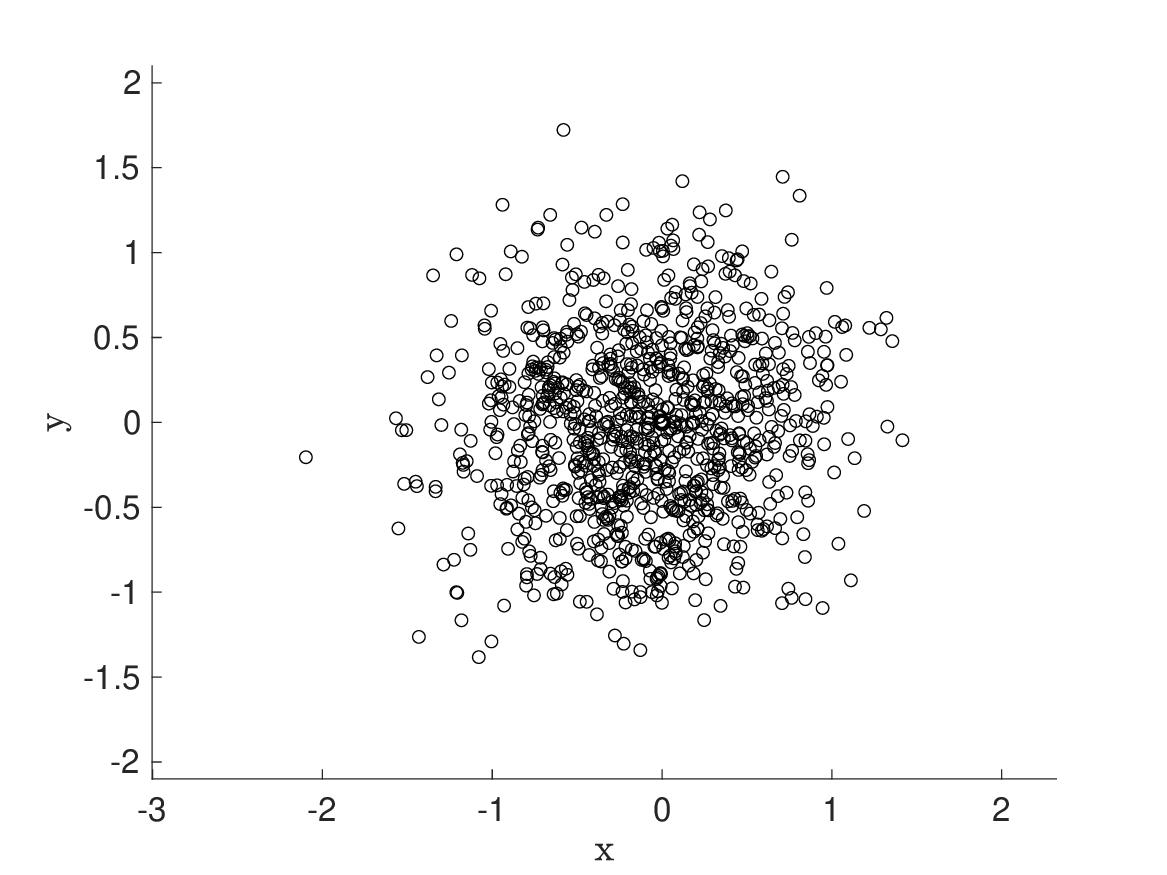}
\end{subfigure}
\caption{Particle positions at $t=0$ (top-left panel), $t=t^*/10$ (top-right panel), $t=t^*$ (mid-left panel), $t=4t^*$ (mid-right panel), $t=10t^*$ (bottom-left panel) and $t=25t^*$ (bottom-right panel). The circle $r=\eta/8$ is depicted by solid line in all figures as point of reference.}
\label{fig:multip_times}
\end{figure}

Fig. \ref{fig:var_N} shows VAR$_N(t)$ for $H=0.7$ (solid line) and $H=0.5$ (dash-dotted line). The Hurst exponent does not appear to have a significant influence for $t \ll t^*$ where the collective behaviour of the particles is dominated by the particular choice of vector field components $\sigma_\textbf{k}$. As time becomes much larger than $t^*$ a Brownian Motion dispersion is recovered. However, to a larger $H$ there corresponds a larger variance as suggested by Remark 6. of Appendix \ref{sec:sigma_H_formula}. 

\begin{figure}[hbt!]
\centering
\includegraphics[width=0.6\textwidth]{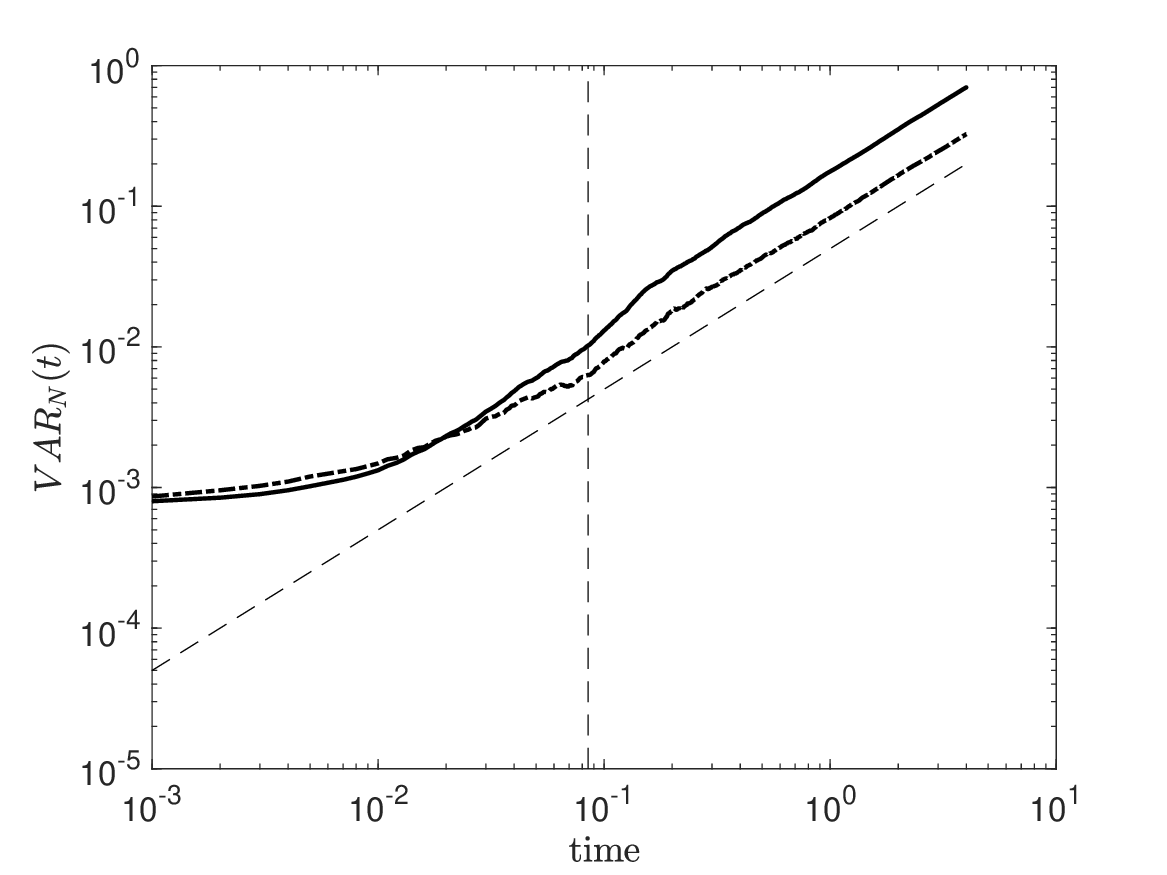}
\caption{Particles variance VAR$_N(t)$ as a function of time for $H=0.7$ (solid line) and $H=0.5$ (dash-dotted line). Slope $1$ is represented by the inclined dashed line while $t=t^*$ is represented by the vertical dashed line.}
\label{fig:var_N}
\end{figure}

\section{Conclusions\label{sec:concl}}
We have investigated the dispersion of particles transported by a stochastic vector field driven by fractional gaussian processes at different scales. We found the existence of two regimes: at small times compared to those taken by a particle to travel a distance of order $\eta$, the motion is governed by the fractional gaussian process of the single vector field components. As the particle travels distances larger than $\eta$, a standard Brownian Motion is recovered. This behaviour is indicated by the double slope in the variance found in the numerical simulations, ranging from $H>1/2$ in the first regime to $H=1/2$ in the second regime. We derived, by a diffusion approximation, an expression for the standard deviation of the Brownian Motion established in the second regime and verified the formula against numerical predictions. A good agreement was found up to an arbitrary constant $\lambda$. 

We then postulated Theorem \ref{theo_main} where converges in law to a 2-dimensional Brownian is expected in the limit $\eta \to 0$. No attempt to prove this theorem was made in this work, but rather we empirically tested its validity through numerical simulations. Our findings, even though computationally  limited to a finite $\eta$, are consistent to the statement of Theorem \ref{theo_main}. We hope that this work will spark interest in the scientific community and serve as a base ground to build upon further studies on Fractional Brownian Motion.

\appendix

\section{Model scaling\label{appendix noise structure}}

Consider the velocity field $\mathbf{u}_{\tau}\left(  \mathbf{x},t\right)  $
given by equation (\ref{smooth velocity}), where $Z_{t}^{\tau,H,\mathbf{k}}$
is given by (\ref{OU}) and $B_{t}^{H,\mathbf{k}}$ are independent FBM's.

\begin{remark}
In law, $Z_{t}^{\tau,H,\mathbf{k}}=\theta_{t/\tau}^{H,\mathbf{k}}$ where
$\theta_{t}^{H,\mathbf{k}}$ satisfies
\[
d\theta_{t}^{H,\mathbf{k}}=-\theta_{t}^{H,\mathbf{k}}dt+\frac{\sqrt{2}}%
{\sqrt{\Gamma\left(  2H+1\right)  }}dB_{t}^{H,\mathbf{k}}.
\]
The proof is based on the fact that
\[
\frac{1}{\tau}\frac{dB_{\cdot}^{H,\mathbf{k}}}{dt}|_{\frac{t}{\tau}}=\frac
{d}{dt}B_{t/\tau}^{H,\mathbf{k}}=\tau^{-H}\frac{d}{dt}\tau^{H}B_{t/\tau
}^{H,\mathbf{k}}=\tau^{-H}\frac{d}{dt}\widetilde{B}_{t}^{H,\mathbf{k}}%
\]
where $\widetilde{B}_{t}^{H,\mathbf{k}}$ is another FBM, since $\tau
^{H}B_{t/\tau}^{H,\mathbf{k}}\overset{\mathcal{L}}{=}\widetilde{B}%
_{t}^{H,\mathbf{k}}$.
\end{remark}

We have:

\begin{lemma}\label{Lemma_norm}
$\mathbb{E}\left[  Z_{t}^{\tau,H,\mathbf{k}}\right]  =0$,
\begin{equation}
\lim_{t \to \infty} \mathbb{E}\left[  \left(  Z_{t}^{\tau,H,\mathbf{k}}\right)  ^{2}\right]
=1\label{varianza uno}%
\end{equation}
\end{lemma}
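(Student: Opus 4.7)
The plan is to solve the linear fractional Ornstein--Uhlenbeck equation explicitly, then compute the first two moments directly from the covariance structure of the fractional Wiener integral. By the variation of constants formula (valid for Young integrals when $H>1/2$, and standard for $H=1/2$), the solution with $Z_0^{\tau,H,\mathbf{k}}=0$ is
\[
Z_t^{\tau,H,\mathbf{k}} \;=\; \frac{c_H}{\tau^H}\int_0^t e^{-(t-s)/\tau}\,dB_s^{H,\mathbf{k}}.
\]
Since the integrand is deterministic, $Z_t^{\tau,H,\mathbf{k}}$ is a centered Gaussian random variable, which immediately yields $\mathbb{E}[Z_t^{\tau,H,\mathbf{k}}]=0$.

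For the variance, I would first exploit the scaling identity in the Remark preceding the lemma: $Z_t^{\tau,H,\mathbf{k}}$ has the same law as $\theta_{t/\tau}^{H,\mathbf{k}}$, so it suffices to prove $\lim_{s\to\infty}\mathbb{E}[(\theta_s^{H,\mathbf{k}})^2]=1$, where $\theta_s^{H,\mathbf{k}} = c_H\int_0^s e^{-(s-u)}\,dB_u^{H,\mathbf{k}}$. Using the standard covariance formula for fractional Wiener integrals of deterministic functions,
\[
\mathbb{E}\Bigl[\Bigl(\int_0^s f(u)\,dB_u^{H}\Bigr)^2\Bigr] \;=\; H(2H-1)\int_0^s\!\!\int_0^s f(u)f(v)\,|u-v|^{2H-2}\,du\,dv,
\]
applied to $f(u)=e^{-(s-u)}$, and then changing variables $u\mapsto s-u$, $v\mapsto s-v$, one obtains
\[
\mathbb{E}[(\theta_s^{H,\mathbf{k}})^2] \;=\; c_H^2\,H(2H-1)\int_0^s\!\!\int_0^s e^{-u-v}|u-v|^{2H-2}\,du\,dv.
\]

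Letting $s\to\infty$ (monotone convergence, integrand positive) and using symmetry plus the substitution $x=u+v$, $y=u-v$ on the region $u>v>0$, the double integral collapses to
\[
2\int_0^\infty y^{2H-2}\!\!\int_y^\infty \tfrac{1}{2}e^{-x}\,dx\,dy \;=\; \int_0^\infty y^{2H-2}e^{-y}\,dy \;=\; \Gamma(2H-1).
\]
The final step is arithmetic: using $c_H^2=2/\Gamma(2H+1)$ together with the identity $\Gamma(2H+1)=2H(2H-1)\Gamma(2H-1)$, one finds
\[
c_H^2\,H(2H-1)\,\Gamma(2H-1) \;=\; \frac{2H(2H-1)\Gamma(2H-1)}{\Gamma(2H+1)} \;=\; 1,
\]
which gives the claim. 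The only mildly delicate point is justifying the Wiener-integral isometry and Fubini in the fractional setting (for $H>1/2$ this is standard, since the integrand $e^{-(s-\cdot)}$ is smooth and the covariance kernel $|u-v|^{2H-2}$ is locally integrable); the case $H=1/2$ degenerates to the usual Itô isometry with kernel $\delta(u-v)$, giving $\int_0^s e^{-2(s-u)}du\to 1/2$ and $c_{1/2}^2=2$, so the conclusion again holds. No novel ingredient is required beyond the explicit representation and the covariance identity.
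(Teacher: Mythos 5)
Your proposal is correct and follows essentially the same route as the paper: the explicit Wiener-integral representation of the fractional Ornstein--Uhlenbeck solution, the covariance formula $H(2H-1)\int_0^s\int_0^s f(u)f(v)\left\vert u-v\right\vert^{2H-2}\,du\,dv$ from Nualart, reduction after rescaling by $\tau$ to the double integral $\int_0^\infty\int_0^\infty e^{-u-v}\left\vert u-v\right\vert^{2H-2}\,du\,dv=\Gamma(2H-1)$, and the identity $\Gamma(2H+1)=2H(2H-1)\Gamma(2H-1)$. The only (harmless) variations are that you evaluate the double integral via the rotation $x=u+v$, $y=u-v$ where the paper uses symmetry, an inner substitution and integration by parts, and that you explicitly handle the degenerate case $H=1/2$ via the It\^o isometry, which the paper's $H>1/2$ covariance formula tacitly leaves aside.
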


\begin{proof}
We have 
\begin{equation*}
Z_{t}^{\tau ,H,\mathbf{k}}=\int_{0}^{t}e^{-\frac{t-s}{\tau }}\frac{\sqrt{2}}{%
\sqrt{\Gamma \left( 2H+1\right) }\tau ^{H}}dB_{s}^{H,\mathbf{k}}.
\end{equation*}%
From \cite{Nualart}, Chapter 5, Section 1, we have%
\begin{eqnarray*}
\mathbb{E}\left[ \left( \int_{0}^{t}e^{-\frac{t-s}{\tau }}dB_{s}^{H,\mathbf{k%
}}\right) ^{2}\right]  &=&H\left( 2H-1\right) \int_{0}^{t}\int_{0}^{t}e^{-%
\frac{t-s}{\tau }}e^{-\frac{t-r}{\tau }}\left\vert s-r\right\vert ^{2H-2}dsdr
\\
&=&H\left( 2H-1\right) \int_{0}^{t}\int_{0}^{t}e^{-\frac{s}{\tau }}e^{-\frac{%
r}{\tau }}\left\vert s-r\right\vert ^{2H-2}dsdr.
\end{eqnarray*}%
Therefore 
\begin{eqnarray*}
\lim_{t\rightarrow \infty }\mathbb{E}\left[ \left( Z_{t}^{\tau ,H,\mathbf{k}%
}\right) ^{2}\right]  &=&\frac{2}{\Gamma \left( 2H+1\right) \tau ^{2H}}%
H\left( 2H-1\right) \int_{0}^{\infty }\int_{0}^{\infty }e^{-\frac{s}{\tau }%
}e^{-\frac{r}{\tau }}\left\vert s-r\right\vert ^{2H-2}dsdr \\
&=&\frac{2H\left( 2H-1\right) }{\Gamma \left( 2H+1\right) }\int_{0}^{\infty
}\int_{0}^{\infty }e^{-s^{\prime }}e^{-r^{\prime }}\left\vert s^{\prime
}-r^{\prime }\right\vert ^{2H-2}ds^{\prime }dr^{\prime } \\
&=&1
\end{eqnarray*}%
because%
\begin{eqnarray*}
&&\int_{0}^{\infty }\int_{0}^{\infty }e^{-s}e^{-r}\left\vert s-r\right\vert
^{2H-2}dsdr \\
&=&2\int_{0}^{\infty }e^{-s}\left( \int_{0}^{s}e^{-r}\left\vert
s-r\right\vert ^{2H-2}dr\right) ds \\
&=&2\int_{0}^{\infty }e^{-2s}\left( \int_{0}^{s}e^{r}r^{2H-2}dr\right) ds \\
&=&2\left[ -\frac{1}{2}e^{-2s}\int_{0}^{s}e^{r}r^{2H-2}dr\right]
_{s=0}^{s=\infty }+2\int_{0}^{\infty }\frac{1}{2}e^{-2s}e^{s}s^{2H-2}ds \\
&=&\int_{0}^{\infty }e^{-s}s^{2H-2}ds=\Gamma \left( 2H-1\right) =\frac{%
\Gamma \left( 2H+1\right) }{2H\left( 2H-1\right) }
\end{eqnarray*}%
recalling that $\Gamma \left( z+1\right) =\int_{0}^{\infty }e^{-s}s^{z}ds$
and $\Gamma \left( z+1\right) =z\Gamma \left( z\right) $.
\end{proof}

From the result of Lemma \ref{Lemma_norm} and the independence, %

\[
\mathbb{E}\left[  \left\vert \mathbf{u}_{\tau}\left(  \mathbf{x},t\right)
\right\vert ^{2}\right]  =\frac{u^{2}}{C_{\eta}^{2}}\sum_{\mathbf{k}%
\in\mathbf{K}_{\eta}}\left\vert \mathbf{\sigma}_{\mathbf{k}}\left(
\mathbf{x}\right)  \right\vert ^{2}.
\]
Hence (recall (\ref{condition on sigma})-(\ref{def of C eta})) the limit
\[
\left\langle \mathbf{u}_{\tau}\right\rangle ^{2}:=\lim_{R\rightarrow\infty
}\frac{1}{R^{2}}\int_{\left[  -\frac{R}{2},\frac{R}{2}\right]  ^{2}}%
\mathbb{E}\left[  \left\vert \mathbf{u}_{\tau}\left(  \mathbf{x},t\right)
\right\vert ^{2}\right]  dx
\]
exists and is given by%
\[
\left\langle \mathbf{u}_{\tau}\right\rangle ^{2}=u^{2}.
\]
This is the motivation for calling $u^{2}$ the mean square turbulent velocity
(or turbulent kinetic energy, multiplied by 2). Moreover, the velocity fields
$\mathbf{u}_{\tau}\left(  \mathbf{\cdot},t\right)  $ are
\textit{instantaneously} on the same ground with respect to the parameters
$\eta,\tau,H$, namely at a given time they have the same average intensity.

Since%
\[
\tau dZ_{t}^{\tau,H,\mathbf{k}}=-Z_{t}^{\tau,H,\mathbf{k}}dt+\tau^{1-H}%
\frac{\sqrt{2}}{\sqrt{\Gamma\left(  2H+1\right)  }}dB_{t}^{H,\mathbf{k}}%
\]
we may conjecture that
\[
Z_{t}^{\tau,H,\mathbf{k}}dt\sim\tau^{1-H}\frac{\sqrt{2}}{\sqrt{\Gamma\left(
2H+1\right)  }}dB_{t}^{H,\mathbf{k}}.
\]
This fact has a rigorous formulation as shown in the next lemma. On its basis, we replace the velocity
field $\mathbf{u}_{\tau}\left(  \mathbf{x},t\right)  $ above by
\[
\mathbf{u}\left(  \mathbf{x},t\right)  =\frac{u\tau^{1-H}\sqrt{2}}%
{\sqrt{\Gamma\left(  2H+1\right)  }C_{\eta}}\sum_{\mathbf{k}\in\mathbf{K}%
_{\eta}}\mathbf{\sigma}_{\mathbf{k}}\left(  \mathbf{x}\right)  \frac
{dB_{t}^{H,\mathbf{k}}}{dt}.
\]

\begin{lemma}\label{Lemma_main}
\begin{equation*}
\mathbb{E}\left[ \left( \int_{0}^{t}Z_{s}^{\tau ,H,\mathbf{k}}ds-\tau ^{1-H}%
\frac{\sqrt{2}}{\sqrt{\Gamma \left( 2H+1\right) }}B_{t}^{H,\mathbf{k}%
}\right) ^{2}\right] \leq \tau ^{2-2H}.
\end{equation*}
\end{lemma}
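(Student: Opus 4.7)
The plan is to turn the stochastic differential equation for $Z^{\tau,H,\mathbf{k}}$ into an algebraic identity for the error, so that the expected square reduces to $\tau^{2}\,\mathbb{E}\bigl[(Z_{t}^{\tau,H,\mathbf{k}})^{2}\bigr]$; then to bound this second moment by $1$ uniformly in $t$ using the monotonicity already implicit in the proof of Lemma \ref{Lemma_norm}. Since $\tau$ is the small relaxation time ($\tau\le 1$), one gets $\tau^{2}\le\tau^{2-2H}$ and the claim follows.

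\textbf{Step 1 (integrate the SDE).} Starting from equation (\ref{OU}) with $Z_{0}^{\tau,H,\mathbf{k}}=0$, I would integrate in time to obtain
\[
Z_{t}^{\tau,H,\mathbf{k}}=-\frac{1}{\tau}\int_{0}^{t}Z_{s}^{\tau,H,\mathbf{k}}\,ds+\frac{c_{H}}{\tau^{H}}B_{t}^{H,\mathbf{k}},
\]
and rearrange this identity as
\[
\int_{0}^{t}Z_{s}^{\tau,H,\mathbf{k}}\,ds-\tau^{1-H}c_{H}\,B_{t}^{H,\mathbf{k}}=-\tau\,Z_{t}^{\tau,H,\mathbf{k}},
\]
with $c_{H}=\sqrt{2}/\sqrt{\Gamma(2H+1)}$. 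This turns what looks like a genuine approximation question into a simple substitution.

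\textbf{Step 2 (reduce to a second-moment bound).} Squaring the previous identity and taking expectation gives
\[
\mathbb{E}\!\left[\Bigl(\int_{0}^{t}Z_{s}^{\tau,H,\mathbf{k}}\,ds-\tau^{1-H}c_{H}B_{t}^{H,\mathbf{k}}\Bigr)^{2}\right]=\tau^{2}\,\mathbb{E}\bigl[(Z_{t}^{\tau,H,\mathbf{k}})^{2}\bigr].
\]
Hence the whole question is to bound $\mathbb{E}[(Z_{t}^{\tau,H,\mathbf{k}})^{2}]$ uniformly in $t\ge 0$.

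\textbf{Step 3 (uniform bound on the stationary variance).} I would repeat the computation of Lemma \ref{Lemma_norm} but without passing to the limit: starting from
\[
\mathbb{E}\bigl[(Z_{t}^{\tau,H,\mathbf{k}})^{2}\bigr]=\frac{c_{H}^{2}H(2H-1)}{\tau^{2H}}\int_{0}^{t}\!\!\int_{0}^{t}e^{-u/\tau}e^{-v/\tau}|u-v|^{2H-2}\,du\,dv,
\]
and noting that the integrand is nonnegative, the double integral is monotone increasing in $t$ and converges as $t\to\infty$ to the value computed in Lemma \ref{Lemma_norm}. Therefore $\mathbb{E}[(Z_{t}^{\tau,H,\mathbf{k}})^{2}]\le\lim_{s\to\infty}\mathbb{E}[(Z_{s}^{\tau,H,\mathbf{k}})^{2}]=1$ for every $t\ge 0$.

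\textbf{Step 4 (conclude).} Combining Steps 2 and 3,
\[
\mathbb{E}\!\left[\Bigl(\int_{0}^{t}Z_{s}^{\tau,H,\mathbf{k}}\,ds-\tau^{1-H}c_{H}B_{t}^{H,\mathbf{k}}\Bigr)^{2}\right]\le\tau^{2},
\]
and since $\tau$ plays the role of a small relaxation time, $\tau\le 1$ together with $H\in[1/2,1)$ gives $\tau^{2}\le\tau^{2-2H}$, which is the stated bound. The only mildly nontrivial point is Step 3, the monotonicity of the OU variance, but it is immediate from the positivity of the integrand; there is no real obstacle here — essentially all the content of the lemma is packed into the exact identity of Step 1.
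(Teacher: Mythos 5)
Your proof is correct, and its Step 1 takes a genuinely shorter route than the paper's. The paper derives the decomposition $\int_0^t Z_s^{\tau,H,\mathbf{k}}\,ds=\tau^{1-H}c_H B_t^{H,\mathbf{k}}+R_t^{H,\mathbf{k}}$ from the explicit convolution representation of $Z^{\tau,H,\mathbf{k}}$, which requires a stochastic Fubini argument for Wiener integrals with respect to fractional Brownian motion (justified there by an integration-by-parts reformulation); the remainder it isolates, $\mp\tau^{1-H}c_H\int_0^t e^{-(t-r)/\tau}\,dB_r^{H,\mathbf{k}}$, is exactly your $-\tau Z_t^{\tau,H,\mathbf{k}}$ (the paper's displayed $R_t^{H,\mathbf{k}}$ carries a harmless sign slip, immaterial after squaring). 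You instead read the identity directly off the integral form of (\ref{OU}) with $Z_0^{\tau,H,\mathbf{k}}=0$, turning the paper's own pre-lemma heuristic ``$Z_t^{\tau,H,\mathbf{k}}\,dt\sim\tau^{1-H}c_H\,dB_t^{H,\mathbf{k}}$'' into an exact pathwise statement with no Fubini needed; this is what your approach buys. Both arguments then reduce to $\tau^2\,\mathbb{E}\bigl[(Z_t^{\tau,H,\mathbf{k}})^2\bigr]$ and require the uniform-in-$t$ bound $\mathbb{E}\bigl[(Z_t^{\tau,H,\mathbf{k}})^2\bigr]\le 1$: the paper asserts this ``follows immediately'' from Lemma \ref{Lemma_norm}, but that lemma only states the $t\to\infty$ limit, so your Step 3 (the double integral of the nonnegative kernel over $[0,t]^2$ is nondecreasing in $t$) is precisely the glue the paper leaves implicit; it is valid for $H>1/2$, and in the boundary case $H=1/2$, where the kernel $H(2H-1)|s-r|^{2H-2}$ degenerates, the It\^o isometry gives $\mathbb{E}\bigl[(Z_t^{\tau,1/2,\mathbf{k}})^2\bigr]=1-e^{-2t/\tau}\le 1$ directly. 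Finally, like the paper, you pass from the sharp bound $\tau^2$ to the stated $\tau^{2-2H}$ via $\tau\le 1$: this smallness assumption is equally needed (but silent) in the paper's proof, since $\tau^2\le\tau^{2-2H}$ forces $\tau^{2H}\le 1$, and it is consistent with $\tau$ being a small relaxation time; making it explicit, and in fact proving the stronger bound $\tau^2$, is a small improvement on the paper's statement.
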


\begin{proof}
As a preliminary remark, we notice that one can use Fubini theorem also for
the Wiener integrals with respect to fractional Brownian motion, thanks to
the reformulation%
\begin{equation*}
\int_{0}^{t}e^{-\frac{\left( t-s\right) }{\tau }}dB_{s}^{H,\mathbf{k}}=-%
\frac{1}{\tau }\int_{0}^{t}e^{-\frac{\left( t-s\right) }{\tau }}B_{s}^{H,%
\mathbf{k}}ds+B_{t}^{H,\mathbf{k}}
\end{equation*}%
and the application of Fubini theorem to the classical integral on the
right-hand-side of this identity. Based on this preliminary fact, we have%
\begin{eqnarray*}
\int_{0}^{t}Z_{s}^{\tau ,H,\mathbf{k}}ds &=&\frac{\sqrt{2}}{\sqrt{\Gamma
\left( 2H+1\right) }\tau ^{H}}\int_{0}^{t}\left( \int_{0}^{s}e^{-\frac{%
\left( s-r\right) }{\tau }}dB_{r}^{H,\mathbf{k}}\right) ds \\
&=&\tau ^{1-H}\frac{\sqrt{2}}{\sqrt{\Gamma \left( 2H+1\right) }}%
\int_{0}^{t}\left( \int_{r}^{t}\frac{1}{\tau }e^{-\frac{\left( s-r\right) }{%
\tau }}ds\right) dB_{r}^{H,\mathbf{k}} \\
&=&\tau ^{1-H}\frac{\sqrt{2}}{\sqrt{\Gamma \left( 2H+1\right) }}B_{t}^{H,%
\mathbf{k}}+R_{t}^{H,\mathbf{k}}
\end{eqnarray*}%
having used $\int_{r}^{t}\frac{1}{\tau }e^{-\frac{\left( s-r\right) }{\tau }%
}ds=1-e^{-\frac{\left( t-r\right) }{\tau }}$, where we set 
\begin{equation*}
R_{t}^{H,\mathbf{k}}=\tau ^{1-H}\frac{\sqrt{2}}{\sqrt{\Gamma \left(
2H+1\right) }}\int_{0}^{t}e^{-\frac{\left( t-r\right) }{\tau }}dB_{r}^{H,%
\mathbf{k}}.
\end{equation*}%
Then it is sufficient to prove that 
\begin{equation*}
\mathbb{E}\left[ \left( R_{t}^{H,\mathbf{k}}\right) ^{2}\right] \leq \tau
^{2-2H}.
\end{equation*}%
But this follows immediately from the result of the previous lemma.
\end{proof}

\section{On formula (\ref{Lagrangian formulation})}\label{sec:app_b}

In this appendix we discuss the formula (\ref{Lagrangian formulation}). If
$\phi:\mathbb{R}^{2}\rightarrow\mathbb{R}$ is a measurable compact support
test function, then
\[
\left\langle T\left(  t\right)  ,\phi\right\rangle =\int\phi\left(
\mathbf{y}\right)  T\left(  \mathbf{y},t\right)  d^{2}y=\int\phi\left(
\mathbf{X}_{t}^{\mathbf{x}}\right)  T\left(  \mathbf{X}_{t}^{\mathbf{x}%
},t\right)  d^{2}x=\int\phi\left(  \mathbf{X}_{t}^{\mathbf{x}}\right)
T_{0}\left(  \mathbf{x}\right)  d^{2}x
\]
where the intermediate identity, based on the change of variable
$\mathbf{y}=\mathbf{X}_{t}^{\mathbf{x}}$, is due to the fact that the
determinant of the Jacobian of $\mathbf{x}\rightarrow\mathbf{X}_{t}%
^{\mathbf{x}}$ solves an equation with the trace of the derivative of the
$\mathbf{\sigma}_{\mathbf{k}}\left(  \mathbf{x}\right)  $, which is the
divergence of $\mathbf{\sigma}_{\mathbf{k}}\left(  \mathbf{x}\right)  $, hence
equal to zero; therefore the Jacobian determinant is equal to one.

In case one is interested in single realizations of $\left\langle T\left(
t\right)  ,\phi\right\rangle $, a numerical method is the following one:
generate a sample of $N$ points $\mathbf{x}_{i}$, $i=1,...,N$, distributed
according to the density $T_{0}\left(  \mathbf{x}\right)  $, hence compute%
\[
\left\langle T\left(  t\right)  ,\phi\right\rangle \sim\frac{1}{N}\sum
_{i=1}^{N}\phi\left(  \mathbf{X}_{t}^{\mathbf{x}_{i}}\right)  .
\]
However, here we mean that we use the same noise realizations for each one of
the points $\mathbf{x}_{i}$. 

In this paper we want to investigate a number of quantities, depending on the
Hurst exponent $H$ and the set $\mathbf{K}_{\eta}$ (and $\phi$):

\begin{enumerate}
\item the function $t\mapsto\mathbb{E}\left[  \left\vert \mathbf{X}%
_{t}^{\mathbf{0}}\right\vert ^{2}\right]  $

\item the mean value $t\mapsto m_{t}^{\phi}=\mathbb{E}\left[  \left\langle
T\left(  t\right)  ,\phi\right\rangle \right]  $

\item the variance $t\mapsto\mathbb{E}\left[  \left(  \left\langle T\left(
t\right)  ,\phi\right\rangle -m_{t}^{\phi}\right)  ^{2}\right]  $.
\end{enumerate}

\subsection{Particular choice of $T_{0}$ and $\phi$}

Let us make the following special choices:\
\[
T_{0}=\delta_{0}%
\]
namely the weak limit of densities of the form $T_{0}^{\epsilon}\left(
\mathbf{x}\right)  =\epsilon^{-2}\theta\left(  \epsilon^{-1}\mathbf{x}\right)
$ with suitable pdf $\theta$, and
\[
\phi=1_{B\left(  \mathbf{0},R\right)  }%
\]
the indicator function of the ball $B\left(  \mathbf{0},R\right)  $. We
expect, in the average, a decrease of $\left\langle T\left(  t\right)
,\phi\right\rangle $.

In this case, using the formula $\left\langle T\left(  t\right)
,\phi\right\rangle =\int\phi\left(  \mathbf{X}_{t}^{\mathbf{x}}\right)
T_{0}\left(  \mathbf{x}\right)  d^{2}x$ and the approximation of $T_{0}%
=\delta_{0}$ by $T_{0}^{\epsilon}$, we get%
\[
\left\langle T\left(  t\right)  ,\phi\right\rangle =\phi\left(  \mathbf{X}%
_{t}^{\mathbf{0}}\right)  =\left\{
\begin{array}
[c]{ccc}%
1 & \text{if} & \left\vert \mathbf{X}_{t}^{\mathbf{0}}\right\vert <R\\
0 & \text{if} & \left\vert \mathbf{X}_{t}^{\mathbf{0}}\right\vert \geq R
\end{array}
\right.
\]
and therefore%
\[
\mathbb{E}\left[  \left\langle T\left(  t\right)  ,\phi\right\rangle \right]
=\mathbb{P}\left(  \left\vert \mathbf{X}_{t}^{\mathbf{0}}\right\vert
<R\right)
\]%
\[
\mathbb{E}\left[  \left(  \left\langle T\left(  t\right)  ,\phi\right\rangle
-m_{t}^{\phi}\right)  ^{2}\right]  =\mathbb{E}\left[  \phi\left(
\mathbf{X}_{t}^{\mathbf{0}}\right)  ^{2}\right]  -\mathbb{P}\left(  \left\vert
\mathbf{X}_{t}^{\mathbf{0}}\right\vert <R\right)  ^{2}=\mathbb{P}\left(
\left\vert \mathbf{X}_{t}^{\mathbf{0}}\right\vert <R\right)  -\mathbb{P}%
\left(  \left\vert \mathbf{X}_{t}^{\mathbf{0}}\right\vert <R\right)  ^{2}%
\]
(because $\phi\left(  \mathbf{X}_{t}^{\mathbf{0}}\right)  ^{2}=\phi\left(
\mathbf{X}_{t}^{\mathbf{0}}\right)  $)
\[
=\mathbb{P}\left(  \left\vert \mathbf{X}_{t}^{\mathbf{0}}\right\vert
<R\right)  \left(  1-\mathbb{P}\left(  \left\vert \mathbf{X}_{t}^{\mathbf{0}%
}\right\vert <R\right)  \right)  .
\]
Therefore, the key quantities in this example of $T_{0}$ and $\phi$ are:

\begin{enumerate}
\item the function $t\mapsto\mathbb{E}\left[  \left\vert \mathbf{X}%
_{t}^{\mathbf{0}}\right\vert ^{2}\right]  $

\item the function $t\mapsto\mathbb{P}\left(  \left\vert \mathbf{X}%
_{t}^{\mathbf{0}}\right\vert <R\right)  $.
\end{enumerate}

\subsection{Exact formulae in the control case}\label{app_sec_control}

In the control case $\mathbf{K}_{\eta}=\left\{  1,2\right\}  $, equation
(\ref{SDE}) read%
\[
d\mathbf{X}_{t}^{\mathbf{0}}=u\frac{\tau_{\eta}^{1-H}}{\sqrt{\Gamma\left(
2H+1\right)  }}d\left(  B_{t}^{H,1},B_{t}^{H,2}\right)
\]
namely
\[
\mathbf{X}_{t}^{\mathbf{0}}=u\frac{\tau_{\eta}^{1-H}}{\sqrt{\Gamma\left(
2H+1\right)  }}\left(  B_{t}^{H,1},B_{t}^{H,2}\right)  .
\]
We have%
\[
\mathbb{E}\left[  \left\vert \mathbf{X}_{t}^{\mathbf{0}}\right\vert
^{2}\right]  =\frac{2}{\Gamma\left(  2H+1\right)  }u^{2}\tau_{\eta}%
^{2-2H}t^{2H}=\frac{2}{\Gamma\left(  2H+1\right)  }u^{2}\tau_{\eta}^{2}\left(
\frac{t}{\tau_{\eta}}\right)  ^{2H}.
\]
In order to compute $\mathbb{P}\left(  \left\vert \mathbf{X}_{t}^{\mathbf{0}%
}\right\vert <R\right)  $, denote by $\mathbf{Z}=\left(  Z_{1},Z_{2}\right)  $
a standard normal vector and notice that $\left(  B_{t}^{H,1},B_{t}%
^{H,2}\right)  =t^{H}\mathbf{Z}$, hence $\mathbf{X}_{t}^{\mathbf{0}}%
=\frac{u\tau_{\eta}^{1-H}}{\sqrt{\Gamma\left(  2H+1\right)  }}t^{H}\mathbf{Z}%
$. Therefore%
\begin{align*}
\mathbb{P}\left(  \left\vert \mathbf{X}_{t}^{\mathbf{0}}\right\vert <R\right)
&  =\mathbb{P}\left(  \left\vert \mathbf{Z}\right\vert <\sqrt{\Gamma\left(
2H+1\right)  }u^{-1}\tau_{\eta}^{H-1}t^{-H}R\right) \\
&  =\mathbb{P}\left(  Z_{1}^{2}+Z_{2}^{2}<\Gamma\left(  2H+1\right)
u^{-2}\tau_{\eta}^{2H-2}t^{-2H}R^{2}\right) \\
&  =\mathbb{P}\left(  Y<\Gamma\left(  2H+1\right)  u^{-2}\tau_{\eta}%
^{2H-2}t^{-2H}R^{2}\right)
\end{align*}
where $Y$ has a Chi-squared distribution with two degrees of freedom, namely
with density
\[
f_{Y,2}\left(  y\right)  =\frac{e^{-y/2}}{2}1_{\left\{  y\geq0\right\}  }.
\]
For large $t$ we have a small value of $\epsilon=\Gamma\left(  2H+1\right)
u^{-2}\tau_{\eta}^{2H-2}t^{-2H}R^{2}$, hence%
\[
\mathbb{P}\left(  Y<\epsilon\right)  \sim f_{Y,2}\left(  0\right)
\epsilon=\frac{\epsilon}{2}%
\]
namely%
\begin{equation}
\mathbb{P}\left(  \left\vert \mathbf{X}_{t}^{\mathbf{0}}\right\vert <R\right)
\sim\frac{\Gamma\left(  2H+1\right)  }{2}u^{-2}\tau_{\eta}^{2H-2}t^{-2H}%
R^{2}=\frac{\Gamma\left(  2H+1\right)  }{2}\frac{R^{2}}{u^{2}\tau_{\eta}^{2}%
}\left(  \frac{t}{\tau_{\eta}}\right)  ^{-2H}%
\label{eq:prob_th_control}
\end{equation}
Notice that dimensional analysis is correct. It follows, for large $t$,
\begin{equation*}
m_{t}^{\phi}    =\mathbb{E}\left[  \left\langle T\left(  t\right)
,\phi\right\rangle \right]  \sim\frac{\Gamma\left(  2H+1\right)  }{2}%
\frac{R^{2}}{u^{2}\tau_{\eta}^{2}}\left(  \frac{t}{\tau_{\eta}}\right)
^{-2H}
\end{equation*}
and
\begin{equation}
\mathbb{E}\left[  \left(  \left\langle T\left(  t\right)  ,\phi\right\rangle
-m_{t}^{\phi}\right)  ^{2}\right]     \sim\frac{\Gamma\left(  2H+1\right)
}{2}\frac{R^{2}}{u^{2}\tau_{\eta}^{2}}\left(  \frac{t}{\tau_{\eta}}\right)
^{-2H}%
\label{eq:var_th_control}
\end{equation}
(since $1-\mathbb{P}\left(  \left\vert \mathbf{X}_{t}^{\mathbf{0}}\right\vert
<R\right)  \rightarrow0$).

\section{Diffusion constant of the approximate Brownian Motion}\label{sec:sigma_H_formula}

The main result of this paper is the fact that, in spite of the memory of the
processes involved, the behavior of a tracer is similar to a Brownian Motion
when the spatial structure of the fluid velocity field is complex enough. A
natural question is whether we can give a formula for the diffusion constant
of the approximate Brownian Motion. In this appendix we conjecture a formula
for this diffusion constant. The procedure to obtain it also clarifies the
intuition behind the fact itself of a Brownian behavior.

Recall that the tracer dynamics, starting from zero, is defined by equation
(\ref{SDE})%
\[
\mathbf{X}_{t}=uC\left(  \eta,\tau,H\right)  \sum_{\mathbf{k}\in
\mathbf{K}_{\eta}}\int_{0}^{t}\mathbf{\sigma}_{\mathbf{k}}\left(
\mathbf{X}_{s}\right)  \circ dB_{s}^{H,\mathbf{k}}.
\]
Over a very short time interval $\left[  t,t+t_{\eta}\right]  $, the
displacement can be approximated by%
\[
\mathbf{X}_{t+t_{\eta}}-\mathbf{X}_{t}\sim uC\left(  \eta,\tau,H\right)
\sum_{\mathbf{k}\in\mathbf{K}_{\eta}}\mathbf{\sigma}_{\mathbf{k}}\left(
\mathbf{X}_{t}\right)  \left(  B_{t+t_{\eta}}^{H,\mathbf{k}}-B_{t}%
^{H,\mathbf{k}}\right)
\]
(for $H=1/2$ we should consider the Stratonovich approximation but one can
show that it is not essential for the final result; we omit this point). We
perform further the following rough approximation (inspired by the
computations of Appendix \ref{appendix noise structure})%
\begin{align*}
\left\vert \mathbf{X}_{t+t_{\eta}}-\mathbf{X}_{t}\right\vert  & \sim uC\left(
\eta,\tau,H\right)  \sqrt{\frac{Card\left(  \mathbf{K}_{\eta}\right)  }{2}%
}t^{H}\\
& =u\frac{\sqrt{2}\tau^{1-H}}{\sqrt{\Gamma\left(  2H+1\right)  }}t^{H}.
\end{align*}

We want to discover that, approximately%
\[
\mathbf{X}_{t}\sim\sigma_{H}\mathbf{W}_{t}%
\]
where $\mathbf{W}_{t}$ is a 2d Brownian Motion. In particular, we want an
estimate of $\sigma_{H}$. The intuition is that, in an average time $t_{\eta}%
$, the tracer jumps from a Fourier component to the other (we mean that the
tracer is influenced mostly by a certain Fourier component, for a time of
order $t_{\eta}$, then mostly by another one). The increments%
\[
\mathbf{X}_{t_{\eta}},\mathbf{X}_{2t_{\eta}}-\mathbf{X}_{t_{\eta}}%
,\mathbf{X}_{3t_{\eta}}-\mathbf{X}_{2t_{\eta}},...
\]
will be approximately independent, since the various Fourier components are
affected by independent processes. Moreover, each increment has a length
$\left\vert \mathbf{X}_{\left(  i+1\right)  t_{\eta}}-\mathbf{X}_{it_{\eta}%
}\right\vert $ or order $\lambda\eta$, for a certain $\lambda>0$. Indeed,
these increments are the displacements when the tracer is affected by a
certain Fourier component, before jumping on another one, but the typical
"distance" to travel in order to jump from one to the other is of the order of
the wave-length of the sinusoidal components of the noise, possibly reduced by
a factor $\lambda$ (the intuition is that the tracer is on the "top" of a
cosine function, where the function takes approximately the value $\pm1$;
moving a little bit, just a portion of the wave-length $\eta$, it will be no
more on the top of that cosine component, but more near the top of another
component). 

Summarizing, at time steps $t_{\eta}$, we have a random walk with
displacements of size $\lambda\eta$. After $N$ time steps, the variance of the
position is of order $N\left(  \lambda\eta\right)  ^{2}$. In other words, at
time $Nt_{\eta}$ the square-average distance from the origin is $N\left(
\lambda\eta\right)  ^{2}$. Which should be also equal to $\sigma_{H}%
^{2}Nt_{\eta}$, hence
\[
\sigma_{H}\sim\frac{\lambda\eta}{\sqrt{t_{\eta}}}.
\]
But we have established above that
\[
\lambda\eta\sim u\frac{\sqrt{2}\tau^{1-H}}{\sqrt{\Gamma\left(  2H+1\right)  }%
}t_{\eta}^{H}.
\]
Hence%
\[
t_{\eta}\sim\left(  \frac{\lambda\eta\sqrt{\Gamma\left(  2H+1\right)  }}%
{\sqrt{2}\tau^{1-H}u}\right)  ^{1/H}%
\]
and finally%
\begin{equation}
\sigma_{H}\sim\left(  \frac{2}{\Gamma\left(  2H+1\right)  }\right)  ^{\frac
{1}{4H}}u^{\frac{1}{2H}}\tau_{\eta}^{\frac{1}{2H}-\frac{1}{2}}\lambda
^{1-\frac{1}{2H}}\eta^{1-\frac{1}{2H}}.
\label{eq:sigma_H}%
\end{equation}
We have not found an argument to predict the coefficient $\lambda$, but we can
show numerically that there exists a value providing a good fit between this
formula and numerical experiments.

\begin{remark}
Call $K=u\frac{\tau_{\eta}}{\eta}$ the Kubo number. Recall from Appendix \ref{sec:app_b}
that, for the Brownian Motion, $\sigma_{MB}\sim\sqrt{2}u_{\eta}\tau_{\eta
}^{\frac{1}{2}}$. Hence
\[
\sigma_{H}\sim\frac{1}{\sqrt{2}}\left(  \frac{2}{\Gamma\left(  2H+1\right)
}\right)  ^{\frac{1}{4H}}\frac{\sigma_{MB}}{K^{1-\frac{1}{2H}}}%
\]
Therefore $\sigma_{H}>>\sigma_{MB}$ if $K<<1.$
\end{remark}

\section*{Acknowledgements}
This research has been funded by the European Union, ERC NoisyFluid, No. 101053472.

\bibliographystyle{unsrt}
\bibliography{mybibfile}

\end{document}